\newtheorem{reduction}{Reduction}
\newtheorem{example}{Example}
\newtheorem{proposition}{Propositon}
\newtheorem{theorem}{Theorem}
\newtheorem{lemma}{Lemma}
\theoremstyle{definition}
\newtheorem{definition}{Definition}
\newcommand{\T}{\mathcal{T}}
\newcommand{\XX}{\mathcal{V}}
\newcommand{\OX}{\mathcal{O}}
\newcommand{\DX}{\mathcal{D}}
\newcommand{\PD}{\mathcal{PD}}
\DeclareMathOperator*{\vcl}{\operatorname{VE}}
\newcommand{\argmax}[1][0]{\underset{#1}{\arg\!\max}\;}
\newcommand*\samethanks[1][\value{footnote}]{\footnotemark[#1]}
\title{Maximizing a Submodular Function with Viability Constraints}
\author{Wolfgang Dvo\v{r}\'ak\thanks{Universit\"at Wien, Fakult\"at f\"ur Informatik}
\and Monika Henzinger\samethanks[1]
\and David P.\ Williamson\thanks{School of Operations Research and Information Engineering, Cornell University}
}
\date{}
\tikzstyle{root}=[]
\tikzstyle{species}=[draw, circle, inner sep=1.5pt]
\tikzstyle{selected}=[draw, circle, inner sep=1pt, fill=gray!15]
\renewcommand{\SC}{\mathcal{S}}
\begin{document}

\maketitle

\begin{abstract}
We study the problem of maximizing a monotone submodular function with viability constraints.
This problem originates from computational biology,
where we are given a phylogenetic tree over a set of species
and a directed graph, the so-called food web, encoding viability constraints between these species.
These food webs usually have constant {depth}.
The goal is to select a subset of $k$ species that satisfies the viability constraints
and has maximal phylogenetic diversity.
As this problem is known to be $\NP$-hard, we investigate approximation algorithms.
We present the first constant factor approximation algorithm if the depth is constant.
Its approximation ratio
is $(1-\frac{1}{\sqrt{e}})$.
This algorithm not only applies to phylogenetic trees with viability constraints
but for arbitrary monotone submodular set functions with viability constraints.
Second, we show that there is no $(1-1/e+\epsilon)$-approximation algorithm for our problem setting (even for additive functions) and
that there is no approximation algorithm for a slight extension of this setting.
\end{abstract}

\section{Introduction}\label{sec:intro}

We consider the problem of maximizing a monotone submodular set function $f$ over subsets of a ground set $X$, subject to a  restriction on what subsets are allowed. As discussed below, this problem has been well-studied with constraints on the allowed sets that are {\em downward-closed}; that is, if $S$ is allowed subset, then so is any $S' \subset S$.  Here we study the problem of maximizing such a function with a constraint that is not downward-closed.  Specifically, we assume that there exists a directed acyclic graph $D$ with the elements of $X$ as nodes in the graph
and only consider so-called \emph{viable} sets of a certain size.
A set $S$ is viable if each element either has no outgoing edges in $D$ or it has a path $P$ to such an element with $P \subseteq S$.
Such viability constraints are a natural way to model dependencies between elements, where an element can only contribute to the function if it appears together with specific other elements.

We are motivated by a  problem arising in conservation biology.  The problem is given as a rooted phylogenetic tree ${\cal T}$ with nonnegative weights on the edges, where the leaves of the tree represent species, and the weights represent genetic distance.  Given a conservation limit $k$, we would like to select $k$ species so as to maximize the overall phylogenetic diversity of the set, which is equivalent to maximizing the weight of the induced subtree on the $k$ selected leaves plus the root.  This problem,  known as
{\em Noah's Ark problem} \cite{Weizman98}, can be solved in polynomial time via a greedy algorithm \cite{Faith92,PardiG05,Steel05}.

When it comes the real-world instances the above formalization of Noah's Ark problem
has been criticized for not considering that the survival of a species might depend
on the survival of other species~(see e.g.~\cite{vanderHeide05}).
If one does not respect these dependencies a species might become extinct even if it is selected for preservation, 
which indeed would result in suboptimal solutions.
Moulton, Semple, and Steel~\cite{MoultonSS2007} introduced an extension of Noah's Ark problem 
which takes into account the dependence of various species on one another in a food web. 
In this food web, an arc is directed from species $a$ towards species $b$ if $a$'s survival depends on species $b$. Moulton et al.\ now consider selecting viable subsets given by the food web of size $k$, i.e.\ a
species is viable if also at least one of its successors in the food web is preserved.
Note that in real life the {\em depth} of the food web, i.e.,
the longest path between any node in $D$ and a node with no out-edge,
is rather small (usually no larger than 30) (see e.g. \cite{ChernomorMFKIHH15}). 

Faller et al.~\cite{FallerSW2007} show that the problem of maximizing phylogenetic diversity with viability constraints  is NP-hard, even in simple special cases with constant depth (e.g.\ the food web is a directed tree of constant depth).

Since phylogenetic diversity induces a monotone, submodular function on a set of species, this problem is a special case of the problem of  maximizing a submodular function with viability constraints.
There exists a long line of research on
approximately maximizing monotone submodular functions with constraints.  This line of work was initiated by Nemhauser et al. \cite{NemhauserWF78} in 1978; they give a greedy $(1 -\frac{1}{e})$-approximation algorithm for maximizing a monotone submodular function subject to a cardinality constraint.  Fisher et al. \cite{FisherNW78} introduced approximation algorithms for maximizing a monotone submodular function subject to matroid constraints (in which the set $S$ must be an independent set in a single or multiple matroids).  In recent work  other types of constraints have been studied, as well as nonmonotone submodular functions;  
see the surveys by Vondrak \cite{Vondrak13} and Goundan and Schulz~\cite{GoundanS07}.

In our case, the viability constraints are {\em not downward-closed} while most of the prior
work studies downward-closed constraints.
One notable exception, where not downward-closed constraints are considered, are matroid base constraints~\cite{LeeMNS09}.
The viability constraint could be extended to be downward-closed by simply 
defining every subset of a viable set to  be allowable.
However, this extension violates the exchange property of matroids and thus viability constraints also differ
from matroid base constraints. Hence we consider a new type of constraint in submodular function maximization. 
We show how variants on the standard greedy algorithm can be used to derive approximation algorithms for maximizing a monotone, submodular function with viability constraints; thus we show that a new type of constraint can be handled in submodular function maximization.

Specifically  we first present a scheme of $(1-\frac{1}{e^{p/(p+d-1)}})/2$ - approximation algorithms for monotone submodular set functions with viability constraints, 
where $d$ is the  minimum of the depth of the food web and $k$, and $p$ is a parameter of the algorithm.
Let $n$ be the number of species and $m$ the number of edges in the food web then above algorithm's running time is in 
$\OX\left( k \cdot ( 3^{p} n^{p+2} + n^{p+1}m)\right)$, i.e., the running time is exponential in $p$ but is polynomial for any fixed $p$.
For instance if we set $p=d$ we achieve a $(1-\frac{1}{\sqrt{e}})/2$ - approximation algorithm. 

We further present a variant of these algorithms which are $(1-\frac{1}{e^{p/(p+d-1)}})$ - approximations,
but whose running time is $\OX\left( k \cdot ( 3^{p} n^{4p+3d-1} + n^{4p+3d-2}m)\right)$, i.e., exponential in both $d$ and $p$. 
For fixed $d\!=\!p$ this is polynomial and provides an $(1-\frac{1}{\sqrt{e}})$ - approximation algorithm.
However, as the running time heavily depends on $d$ and $p$ this is only feasible for small values of $d$.
For larger values of $d$ one has to find the right balance between running-time and approximation guarantee.

Next by a reduction from the maximum coverage problem, we show that there is no $(1-1/e+\epsilon)$-approximation algorithm for the phylogenetic diversity problem with viability constraints (unless $\P=\NP$).
However, the reduction from maximum coverage introduces a food web of linear depth. 
Thus, we further give a reduction from Max Vertex Cover that shows $\APX$-hardness even for instances with constant depth food webs.
Finally we consider a generalization of our problem where we additionally allow AND-constraints such as ``species $a$ is viable only  if we preserve \emph{both} species $b$ and species $c$'' and show that this generalization has no approximation algorithm (assuming $\P \neq \NP$) by a reduction from 3-SAT.

The remainder of the paper is structured as follows.
We define the problem more precisely in Section \ref{prob}, 
introduce our algorithms in Section \ref{alg}, 
and give the hardness results in Section \ref{hard}.

\section{Phylogenetic Diversity with Viability Constraints}
\label{prob}

We first give a formal definition of the problem.
\begin{definition}
A (rooted) \emph{phylogenetic tree} $\T=(T,E_\T)$ is a rooted tree with root $r$ and each non-leaf node having at least
2 child-nodes together with a weight function $w$ assigning non-negative integer weights to the edges.
Let $X_\T$ denote the set of leaf nodes of $\T$ also called species.
For any set $A \subseteq X_\T$ the operator $\T(A)$ yields the spanning tree of the set $A \cup \{r\}$,
and by  $\T_E(A)$ we denote the edges of this spanning tree.
Then for any set $S \subseteq X_\T$ the \emph{phylogenetic diversity} is defined as
$$
\PD(S)= \sum_{e: e \in \T_E(S)} w(e)
$$
A \emph{food web} $D$ for the phylogenetic tree $\T=(T,E_\T)$ is an acyclic directed graph $(X_\T,E)$.
A set $S \subseteq X_\T$ is called viable if each $s \in S$ is either a sink (a node with out-degree $0$) in $D$ or there is a $s' \in S$ such that $(s,s') \in E$.
\end{definition}

Below we exemplify these concepts.

\begin{example}\label{example1}
Consider the phylogenetic tree given in Figure~\ref{figure:example1}.
The root $r$ is at the top and the set of species $X=\{A,B,C,D,E\}$ at the bottom.
We omitted names for the two inner nodes.
Now we have that $\PD(\{A\})=2$, $\PD(\{B\})=3$ and $\PD(\{A,B\})=4$. 
Finally considering the concept of viable sets. 
Considering the given food web we have that 
$\{A,B\},\{A,B,D\}$ are viable as $B$ is a sink and 
for $A, D$ one of the successors is included in the set.
On the other hand side the sets $\{A\}$, $\{A,D,E\}$ are not viable as none of the successors of $A$ is included.
\begin{figure}[thb]
\centering
\subfigure[Phylogenetic Tree]{ \begin{tikzpicture}[scale=1,>=stealth]
		\path 	node[root](r){r}
			++(-1.5,-1)node[species](i1){}
			++(3,0)node[species](i2){}
			(-2,-2)node[species](a){A}
			++(1,0)node[species](b){B}
			++(1,0)node[species](c){C}
			++(1,0)node[species](d){D}
			++(1,0)node[species](e){E}
			;
		\scriptsize
		\path [-,] %
			(r) edge node[fill=white] {$1$}  (i1)
			(r) edge node[fill=white] {$2$}  (i2)
			(i1) edge node[fill=white] {$1$}  (a)
			(i1) edge node[fill=white, inner sep=3pt] {$2$} (b)
			(i2) edge node[fill=white, inner sep=3pt] {$2$} (d)
			(i2) edge node[fill=white, inner sep=3pt] {$1$} (e)
			(r) edge node[fill=white] {$3$} (c)
			 ;
		\end{tikzpicture}}
\hspace{30pt}
\subfigure[Food Web]{\begin{tikzpicture}[scale=1,>=stealth]
		    \path   (0,0) node[species](a){A}
			    ++(-0.5,-1)node[species](b){B}
			    ++(1,0)node[species](c){C}
			    (1.5,0)node[species](d){D}
			    ++(0.5,-1)node[species](e){E}
			    ;
		    \scriptsize
		    \path [->,] %
			    (a) edge (b)
			    (a) edge (c)
			    (d) edge (a)
			    (d) edge (e)
			    ;
		\end{tikzpicture}}
\caption{An illustration of Example~\ref{example1}.}
\label{figure:example1}
\end{figure}
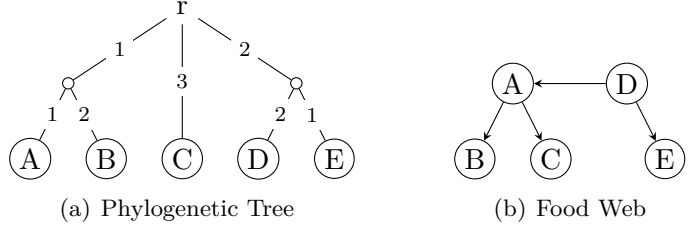
\end{example}

\pagebreak
Now our problem of interest is defined as follows.

\begin{definition}[OptPDVC]
The \emph{Optimizing Phylogenetic Diversity with Viability Constraints} (OptPDVC) problem is defined as follows.
You are given a phylogenetic tree $\T$ and a food web $D=(X_\T,E)$, and a positive integer $k$.
Find a viable subset $S \subseteq X_\T$ of size (at most) $k$ maximizing $\PD(S)$.
\end{definition}
OptPDVC is known to be $\NP$-hard \cite{FallerSW2007}, even for restricted classes of phylogenetic trees and dependency graphs.

First we study fundamental properties of the function $\PD$.
\begin{definition}
The set function $\PD(.|.): 2^X \times 2^X \mapsto \mathbb{N}_0$ for each $A,B \subseteq X$  is defined as
$\PD(A|B)=\PD(A \cup B) - \PD(B)$.
\end{definition}
The intuitive meaning of $\PD(A|B)$ is the gain of diversity we get by adding the set $A$ to the already selected species $B$.

We next recall the definition of submodular set functions.
We call a set function \emph{submodular} if
$$
\forall A, B, C \subseteq \Omega: A \subseteq B \Rightarrow f(A\cup C)-f(A)\geq f(B\cup C)-f(B).
$$
It turns out that $\PD$ as defined above happens to be a submodular function.
\begin{proposition}\label{prop:pd_submodular}
$\PD$ is a non-negative  monotone submodular function~\cite{bordewichS12}.
\end{proposition}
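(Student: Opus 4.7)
The plan is to recognize $\PD$ as a non-negatively weighted coverage function and then invoke the standard fact that coverage functions are monotone and submodular. Concretely, for each edge $e \in E_\T$, removing $e$ splits the tree into two components; let $L(e) \subseteq X_\T$ denote the set of leaves in the component \emph{not} containing the root $r$. Because $\T(S)$ is the minimal subtree spanning $S \cup \{r\}$, the edge $e$ belongs to $\T_E(S)$ if and only if at least one leaf below $e$ lies in $S$, i.e., $L(e) \cap S \neq \emptyset$. This yields the identity
$$
\PD(S) \;=\; \sum_{e \in E_\T} w(e) \, \mathbf{1}[L(e) \cap S \neq \emptyset].
$$

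From this form the three claims follow quickly. Non-negativity is immediate since each $w(e) \geq 0$. For monotonicity, each indicator $\mathbf{1}[L(e) \cap S \neq \emptyset]$ is non-decreasing in $S$, and a non-negative combination of non-decreasing functions is non-decreasing. For submodularity, by linearity it suffices to verify the diminishing-returns property for every single-edge indicator $f_e(S) := \mathbf{1}[L(e) \cap S \neq \emptyset]$: for $A \subseteq B$ and $x \notin B$, the marginal $f_e(A \cup \{x\}) - f_e(A)$ equals $1$ precisely when $x \in L(e)$ and $A \cap L(e) = \emptyset$, while $f_e(B \cup \{x\}) - f_e(B)$ equals $1$ precisely when $x \in L(e)$ and $B \cap L(e) = \emptyset$; the latter condition implies the former, so the $A$-marginal dominates the $B$-marginal and $f_e$ is submodular.

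There is no real obstacle here beyond setting up the edge-inclusion characterization that powers the decomposition; once it is in place, the proof reduces to the textbook verification that a non-negatively weighted sum of single-set coverage indicators is monotone and submodular.
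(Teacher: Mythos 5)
Your proof is correct. Note, however, that the paper does not prove this proposition at all: it simply cites Bordewich and Semple~\cite{bordewichS12} for the fact that $\PD$ is non-negative, monotone and submodular. So your contribution is a self-contained, elementary argument where the paper defers to the literature. Your route --- writing $\PD(S)=\sum_{e\in E_\T} w(e)\,\mathbf{1}[L(e)\cap S\neq\emptyset]$ with $L(e)$ the leaves separated from the root by $e$, and then observing that a non-negatively weighted sum of coverage-type indicators is monotone and submodular --- is the standard folklore proof, and the edge-inclusion characterization ($e\in\T_E(S)$ iff some leaf of $S$ lies below $e$) is exactly the right structural fact, since in a tree every root-to-leaf path is unique. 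One small point worth making explicit: you verify submodularity of each $f_e$ via single-element marginals ($x\notin B$), whereas the paper's definition quantifies over arbitrary sets $C$; these two formulations are equivalent by the standard argument (add the elements of $C$ one at a time and telescope), so this is a presentational gap only, not a mathematical one. In exchange for a few extra lines, your version makes the proposition verifiable without consulting the cited reference, which is a reasonable trade-off.
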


Now consider the function $\PD(.|.)$.
As $\PD(.)$ is monotone also $\PD(.|.)$ is monotone in the first argument and because of the submodularity of $\PD(.)$
the function $\PD(.|.)$ is anti-monotone in the second argument.

In the remainder of the paper we will not refer to the actual definition of the functions $\PD(.)$, $\PD(.|.)$,
but only exploit monotonicity, submodularity and the fact that these functions can be efficiently computed.

Moreover we will consider a function $\vcl$ (\emph{viable extension}) which, given a set of species $S$, returns a
viable set $S'$ of minimum size containing $S$. In the simplest case where $S$ consist of just one species it computes a shortest
path to any sink node in the food web.
Given a food web $D$ and the set $\mathcal{P}_D$ of paths that end in leaf node of $D$,
we define the \emph{truncated  depth}
\footnote{Notice that we use a slightly different definition for $d$ than in~\cite{DvorakHW13}.}
$d$ of $D$ as
$$
    d(D) = \min(\max_{P \in \mathcal{P}_D} |P|, k)
$$
i.e., as the minimum of the cardinality of the longest path that ends in a node without outgoing edges and $k$. 
If the food web is clear from the context we just write $d$ instead of $d(D)$. 
Note that the problem remains $\NP$-hard for instances with $d=2$, even if $\PD$ is additive~\cite{FallerSW2007}.
Finally we define the \emph{costs} $c(A|B)$ of adding a set of species $A$ to a set $B$
$$
c(A|B)= |\vcl(B \cup A)|-|B|.
$$
Notice that although $B$ typically will be a viable set this is not required in the definition.

\section{Approximation Algorithms}
\label{alg}
In this section we assume that a non-negative, monotone submodular function $\PD(.)$ is given as an oracle and we want to maximize  $\PD(.)$ under viability constraints together with a cardinality constraint.
We first review the greedy algorithm given by Faller et al.~\cite{FallerSW2007} presented in
Algorithm~\ref{alg:faller}.
The idea is that, in each step, one considers only species which either have no successors in the food web
or for which one of the successors has already been selected (adding one of these species will keep the set viable).
Then one adds the species that gives the largest gain of phylogenetic diversity.
\begin{algorithm}[t]
\caption{Greedy, Faller et al.}
\begin{algorithmic}[1]
\State $S \leftarrow \emptyset$
\While{$|S|<k$}
        \State $s \leftarrow \argmax[c(s|S)=1] v(\{s\} | S)$
	\State $S \leftarrow S \cup \{s\}$
\EndWhile
\end{algorithmic}
\label{alg:faller}
\end{algorithm}
By the restriction on the considered species the constructed set is always viable,
but we might miss highly valuable species which is illustrated by the following example.
\begin{example}\label{example2}
Consider the set of species $X=\{y,z,x_1,x_{2}\}$ the phylogenetic tree
$\T=(\{r\} \cup X,\{(r,y),(r,z),(r,x_1),(r,x_2)\}$,
weights $w(r,x_i)=1$, $w(r,y)=0$, $w(r,z)=C$ with $C>2$ and the food web $(X,\{(z,y)\})$ (see Fig.~\ref{figure:example2}).
Assume a budget $k=2$. As the species $y$ has weight $0$, Algorithm~\ref{alg:faller}
would pick $x_1,$ and $x_2$.
Hence Algorithm~\ref{alg:faller} results in a viable set with diversity $2$.
But the set $\{z,y\}$ is viable and has diversity $C$, which can be made arbitrarily large.
\begin{figure}[b]
\centering
\subfigure[Phylogenetic Tree]{ \begin{tikzpicture}[scale=1,>=stealth]
		\path 	node[root](r){$r$}
			++(-1.5,-1.5)node[species, inner sep=2pt](y){$y$}
			++(1,0)node[species, inner sep=2.5pt](z){$z$}
			++(1,0)node[species](x1){$x_1$}
			++(1,0)node[species](x2){$x_2$}
			;
		\path [-,] %
			(r) edge node[fill=white] {$0$}  (y)
			(r) edge node[fill=white, inner sep=2pt] {$C$} (z)
			(r) edge node[fill=white] {$1$} (x1)
			(r) edge node[fill=white] {$1$} (x2)
			 ;
\end{tikzpicture}}
\hspace{30pt}
\subfigure[Food Web]{ \begin{tikzpicture}[scale=1,>=stealth]
		\path 	node[species,inner sep=2pt](y){$y$}
			++(0,1)node[species, inner sep=2.5pt](z){$z$}
			++(1,-1)node[species](x1){$x_1$}
			++(1,0)node[species](x2){$x_2$}
			;
		\path [->]
	             (z) edge (y)
			;
\end{tikzpicture}}
\caption{An illustration of Example~\ref{example2}.}
\label{figure:example2}
\end{figure}
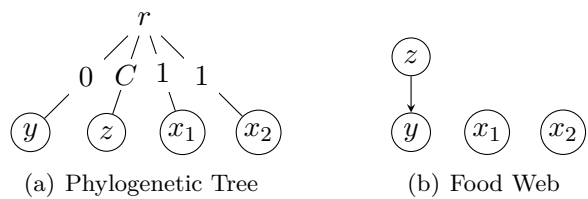
\end{example}
This example shows that the solution computed by the greedy algorithm can perform arbitrarily bad,
because it ignores highly weighted species if they are ``on the top of'' less valuable species.
Hence one can not give any approximation guarantee for Algorithm~\ref{alg:faller}.

\subsection{A Greedy Approximation Algorithm}\label{subsec:greedy}
By the above observations, to get an approximation guarantee, 
we have to consider all subsets of species up to a certain size $p$
which can be made viable and pick the most valuable subset.
Algorithm~\ref{alg:par_approx} exploits this observation.
It generalizes concepts from Bordewich and Semple~\cite{BordewichS2008}, 
which itself builds on Khuller et al.~\cite{KhullerMN99}.
In the algorithm $G$ denotes the current set of selected species; and
$\mathcal{G}$ denotes the best viable set we have found so far.
Lines 3 - 7  of the algorithm implements a greedy algorithm that in each step selects among the sets of size $\leq p$
with costs that are within the remaining budget the most ``cost efficient'' subset of species,
i.e.\ the subset $S$ of species that maximizes the ratio of the increase in PD over the cost of adding $S$,
and adds it to the solution.
But Algorithm~\ref{alg:par_approx} does not solely run the greedy algorithm, it first computes the set with maximal $\PD$ among
all sets of size $\leq p$ that can be made viable.
In certain cases this set is better than the viable set obtained by the greedy algorithm, 
a fact that we exploit in the proof of Theorem~\ref{thm:par_approx}.
\begin{algorithm}[tb]
\caption{}
\label{alg:par_approx}
\begin{algorithmic}[1]
      \State Select a set $S$ with $|S| \leq p$, $c(S|\emptyset) \leq k$ and maximal $\PD(S)$
      \State $\mathcal{G} = \vcl(S)$
      \State $G \leftarrow \emptyset$
      \While{$|G|<k$}
			\State Select a set $S$ with $|S| \leq p$, $c(S|G) \leq k-|G|$ and maximal ratio $\frac{\PD(S|G)}{c(S|G)}$
			\State $G = \vcl(G \cup S)$
      \EndWhile
      \If{$\PD(G)>\PD(\mathcal{G})$}
	  \State $\mathcal{G}\leftarrow G$
      \EndIf
\end{algorithmic}
\end{algorithm}

The next theorem will analyze the approximation ratio of Algorithm~\ref{alg:par_approx}.

\begin{theorem}\label{thm:par_approx}
For all integers $p \geq 1$
Algorithm~\ref{alg:par_approx} gives a $(1-\frac{1}{e^{p/(p+d-1)}})/2$ approximation.
\end{theorem}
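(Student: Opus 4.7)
The plan is to adapt the Khuller--Moss--Naor budgeted-greedy framework~\cite{KhullerMN99} to our viability setting. Let $OPT$ be an optimal viable solution, and $G_0=\emptyset,G_1,G_2,\ldots$ the sets produced during the greedy phase (Lines~4--7), with $S_i$ the subset chosen at iteration $i$ and $s_i=c(S_i\mid G_{i-1})$. My overall strategy is to show $\PD(G_t)+\PD(S_{t+1}\mid G_t)\geq (1-e^{-p/(p+d-1)})\PD(OPT)$, where $t$ is the last completed iteration and $S_{t+1}$ is the ratio-maximising size-$\leq p$ set greedy would pick next if the remaining-budget check were dropped; a standard $\max$-doubling against the candidate $\mathcal{G}$ from Lines~1--2 then loses a factor of two and yields the theorem.

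The core of the argument is the per-iteration ratio bound $\PD(S_i\mid G_{i-1})/s_i \geq \tfrac{p}{(p+d-1)k}(\PD(OPT)-\PD(G_{i-1}))$. To establish it I would, for every non-sink species of $OPT$, pick one of its food-web successors inside $OPT$ (available because $OPT$ is viable), obtaining a directed in-forest on $OPT$ whose roots are the sinks of the food web; partition this in-forest into node-disjoint directed paths ending at sinks; and subdivide each path into consecutive blocks $B_j$ of size at most $p$. Since every path has length at most $d$, extending the bottom species of any block down its path to a sink costs at most $d-p$ further species of $OPT$, giving $c(B_j\mid G_{i-1})\leq d$. A short telescoping computation over the blocks of each chain yields $\sum_j c(B_j\mid G_{i-1})\leq |OPT|(p+d-1)/p\leq k(p+d-1)/p$, while submodularity of $\PD$ gives $\sum_j \PD(B_j\mid G_{i-1})\geq \PD(OPT)-\PD(G_{i-1})$. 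Weighted averaging then isolates a block meeting the claimed ratio, and because the greedy step picks the largest-ratio feasible size-$\leq p$ set, $S_i$ inherits at least as good a ratio.

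Given the per-step inequality, I rearrange to $\PD(OPT)-\PD(G_i)\leq (1-p\,s_i/[(p+d-1)k])(\PD(OPT)-\PD(G_{i-1}))$, iterate over $i=1,\ldots,t+1$, and combine $1-x\leq e^{-x}$ with $\sum_{i\leq t+1}s_i>k$ (the defining property of the first infeasible step) to arrive at $\PD(G_t\cup S_{t+1})\geq (1-e^{-p/(p+d-1)})\PD(OPT)$, i.e.\ the intermediate target. It then remains to show $\PD(\mathcal{G})\geq \PD(S_{t+1}\mid G_t)$, which follows from monotonicity and submodularity once $S_{t+1}$ is a valid Line~1 candidate (i.e.\ $c(S_{t+1}\mid\emptyset)\leq k$): $\PD(\mathcal{G})\geq \PD(\vcl(S_{t+1}))\geq \PD(S_{t+1})\geq \PD(S_{t+1}\mid G_t)$. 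This $c$-condition holds in the main parameter regime (essentially $pd\leq k$); corner cases are absorbed by Line~1's exhaustive enumeration over size-$\leq p$ sets. Putting everything together, $\max\{\PD(G_t),\PD(\mathcal{G})\}\geq \tfrac12(\PD(G_t)+\PD(S_{t+1}\mid G_t))\geq \tfrac12(1-e^{-p/(p+d-1)})\PD(OPT)$.

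The hard part is the chain-decomposition lemma: showing that $OPT$ can be partitioned into size-$\leq p$ blocks whose extension costs average at most $(p+d-1)/p$ species per species of $OPT$. A naive partition into arbitrary size-$p$ subsets can force a block to cost up to $pd$ (for $p$ independent food-web sources with disjoint viability tails), which would only yield the weaker factor $p/d$ in the exponent. Extracting the sharper $p+d-1$ per-block bound relies on aligning blocks with the viability in-forest of $OPT$, so that the $p-1$ interior species of each block already discharge most of the dependency and only a single chain tail needs extension. Once this combinatorial lemma is in hand, the telescoping of the submodular decay and the $1/2$-doubling are routine adaptations of the standard budgeted submodular maximization analysis.
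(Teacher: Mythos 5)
Your high-level architecture is the same as the paper's (a per-iteration ratio bound against a decomposition of the optimum into size-$\le p$ blocks, a telescoping argument giving $1-e^{-p/(p+d-1)}$, and a factor-$2$ comparison with the Line-1 candidate), but the combinatorial lemma you lean on is not delivered by your construction, and that is precisely the hard part. First, an in-forest cannot in general be partitioned into node-disjoint directed paths that all end at sinks: in a star where species $v_1,\dots,v_m$ all point to a single sink $s$, only one path can contain $s$, so $v_2,\dots,v_m$ become singleton paths. More importantly, the aggregate bound $\sum_j c(B_j\mid G_{i-1})\le \frac{k}{p}(p+d-1)$ fails for path-aligned blocks: in the star with $p=2$, $d=2$, $k=m+1$ and $G_{i-1}=\emptyset$, your decomposition yields one block $\{v_1,s\}$ of cost $2$ and $m-1$ singleton blocks of cost $2$ each, total about $2k$, whereas the required bound is $\frac{3}{2}k$; with a ``broom'' ($m$ sources all feeding into one chain of $d-1$ species ending in a sink) the total is $\Theta(kd)$, which for $p=d$ would only support an exponent of order $1/d$ instead of $p/(p+d-1)$. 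The failure mode is many undersized blocks, each paying its own tail of up to $d-1$ extension species, so the amortization of the tail over $p$ elements is lost. The paper's Lemma~\ref{lemma:decomposition} avoids exactly this: it runs a DFS on the reversed food web restricted to $O$ (with an artificial root), cuts the post-order sequence into consecutive chunks of exactly $p$ popped nodes, and takes $B_i$ to be the current DFS stack (at most $d-1$ nodes); then every block except the last has exactly $p$ elements and cost at most $p+d-1$, giving $\sum_i |O_i\cup B_i|\le\frac{k}{p}(p+d-1)$. In the star the right blocks are sibling pairs sharing the single tail $\{s\}$, which path-chopping never produces.

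There is a second gap in how you handle the budget at the end. Your per-step inequality is invoked for every $i\le t+1$, but its proof needs the greedy's pick $S_i$ to dominate the ratio of every block, and greedy only maximizes over sets with $c(S\mid G_{i-1})\le k-|G_{i-1}|$; once some block (cost up to $p+d-1$) no longer fits the remaining budget, nothing forces $S_i$ to beat it, and this can happen well before the last iteration. The paper instead stops the analysis at the first iteration $l$ after which some block of the decomposition beats every affordable candidate, and uses that block $O'_{l+1}$ as the fictitious $(l+1)$-st step: its unaffordability is exactly what yields $\sum_{m\le l+1}c_m>k$, and since $O'_{l+1}$ is contained in the viable optimum of size at most $k$, it is automatically a feasible Line-1 candidate, so $\PD(O'_{l+1})\le\PD(S_o)$ holds with no side condition. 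Your substitute --- dropping the budget check at step $t+1$ and asserting $c(S_{t+1}\mid\emptyset)\le k$ ``in the main regime $pd\le k$'' --- is not justified for the general statement (the theorem is claimed for all $p\ge 1$). Both issues are repaired by adopting the paper's decomposition and its definition of $l$ and $O'_{l+1}$; the telescoping and the final halving step in your write-up are then fine.
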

As for $p \geq k$ the algorithm enumerates all possible solutions and thus is optimal in the following
analysis we will assume that $p \leq k$. 
To prove Theorem~\ref{thm:par_approx},
we introduce some notation.
First let $O \subseteq X$ denote the optimal solution.
We will consider a decomposition $\DX_O$ of $O$ into sets $O_1, \dots, O_{\lceil k/p \rceil}$ of size $\leq p$.
By decomposition we mean that
(i) $\bigcup_{i=1}^{\lceil k/p \rceil} O_{i} = O$ and
(ii) $O_i \cap O_j = \emptyset$ if $i\not=j$.
Moreover we require that $|\vcl(O_i)| \leq p+d-1$ and $\sum_i |\vcl(O_i)| \leq \frac{k}{p}(p+d-1)$.
Next we show that such a decomposition $\DX_O$ always exists.
\begin{lemma}\label{lemma:decomposition}
There exist $\lceil \frac{k}{p} \rceil$ many pairs $(O_1,B_1),\!\dots\!, (O_{\lceil \frac{k}{p} \rceil},B_{\lceil \frac{k}{p} \rceil})$ 
such that
$O\!=\!\bigcup_{1 \leq i \leq \lceil \frac{k}{p} \rceil} O_i$, $O_i \cup B_i$ is viable, $|O_i| \leq p$, $| B_i| \leq d-1$ and
$\sum_i |O_i \cup B_i| \leq \frac{k}{p}(p+d-1)$.
\end{lemma}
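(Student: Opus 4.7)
The plan is to construct the pairs $(O_i,B_i)$ from an auxiliary rooted forest $F$ on $O$ and a greedy subtree-peeling procedure. To build $F$, I would use viability of $O$: for every non-sink $s \in O$ there is some $D$-successor of $s$ inside $O$; pick one such successor $\sigma(s) \in O$ and declare it the $F$-parent of $s$. The $D$-sinks contained in $O$ become the roots of $F$. Every root-to-leaf path in $F$ corresponds to a path in $D$ ending at a sink and therefore has at most $d$ nodes, so every node of $F$ has $F$-depth at most $d-1$.

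Next I would carve $F$ into groups of size $\le p$ by a bottom-up peeling. While the remaining forest has more than $p$ nodes, pick a node $v$ such that its current $F$-subtree has at least $p$ nodes but every child of $v$ has an $F$-subtree of size strictly less than $p$ (such a $v$ always exists). Bundle $v$ together with exactly $p-1$ of its descendants into the next $O_i$ (taking whole child-subtrees greedily until the count reaches $p$, possibly leaving $v$ itself for a later group). Define $B_i$ as the $F$-path from the topmost node of $O_i$ up to its sink in $F$, minus $O_i$ itself; since all nodes of $O_i$ sit under a common ancestor at $F$-depth at most $d-1$, we get $|B_i| \le d-1$, and the chosen successors guarantee that $O_i \cup B_i$ is viable. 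Peel $O_i$ off $F$ and iterate. When fewer than $p$ species remain, make them the final group; because $O$ contains at least one $D$-sink and sinks act as $F$-roots, we can arrange the peeling so that the remainder group contains a sink and is therefore already viable with $B=\emptyset$. This produces exactly $\lceil k/p \rceil$ pairs with $|O_i|\le p$ and $|B_i|\le d-1$.

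To establish the sum bound, note that $\sum_i |O_i \cup B_i| = k + \sum_i |B_i|$ since $O_i \cap B_i = \emptyset$ by construction, so it suffices to show $\sum_i |B_i| \le k(d-1)/p$. Write $k = mp + r$ with $0 \le r < p$. The naive bound $\sum_i |B_i| \le \lceil k/p \rceil (d-1)$ already gives the claim when $p \mid k$. Otherwise the peeling arranges the sink-containing group to be precisely the final remainder group of size $r$ with $B_{m+1} = \emptyset$, so $\sum_i|B_i| \le m(d-1) \le k(d-1)/p$, which is exactly what is needed.

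The main obstacle is step two: guaranteeing $|B_i|\le d-1$ for every group simultaneously with producing only $\lceil k/p\rceil$ groups, especially when $F$ is branchy (e.g. a star) so individual children subtrees are far too small to be groups on their own. The key trick is to pick the peeling node $v$ as the lowest node whose accumulated subtree first reaches size $p$: this forces all $p$ species of $O_i$ to share $v$ as a common $F$-ancestor, so the single path in $F$ from $v$ up to a sink (of length at most $d-1$) simultaneously witnesses viability for every element of the group and bounds $|B_i|$ as required.
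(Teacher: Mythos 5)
Your overall plan (build a successor forest on $O$, peel off groups of at most $p$ species, and let $B_i$ be the path from the top of a group up to a sink) is the same idea that underlies the paper's proof, but the concrete peeling procedure has genuine gaps. First, the pivot node $v$ ``whose subtree has at least $p$ nodes while every child subtree has fewer than $p$'' need not exist: $F$ is a forest whose trees may all have fewer than $p$ nodes (for instance when $O$ contains many sinks), both initially and after earlier peels, so the while-loop as described cannot proceed; groups must then span several trees, and for such groups the claim that all of $O_i$ sits under one common ancestor --- which is what bounds $|B_i|$ --- is exactly the missing point. Second, ``taking whole child-subtrees greedily until the count reaches $p$'' cannot in general yield a group of size exactly $p$ (or $p-1$): subtree sizes need not pack (take $p=5$ and all child subtrees of size $3$), and exactness for all but the last group is not cosmetic. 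If groups may be smaller than $p$, you produce more than $\lceil k/p\rceil$ pairs, and since each additional nonempty $B_i$ can cost $d-1$, the bound $\sum_i |O_i\cup B_i|\le \frac{k}{p}(p+d-1)$ breaks. The natural repair, splitting one child subtree to make up the count, forces you to take a parent-closed portion of it (otherwise $O_i\cup B_i$ is no longer viable) and orphans the remainder of that subtree, i.e., leaves behind nodes whose chosen successor has been peeled away; neither your later iterations nor your definition of $B_i$ accounts for these orphans. Third, the statement that the final remainder group ``contains a sink and is therefore already viable with $B=\emptyset$'' is a non sequitur: containing a sink says nothing about the other members, and after peels that split subtrees (or take $v$ without all of its children, or children without $v$) the remainder is in general not closed under $F$-parents, hence not viable --- yet an empty $B$ for the last group is precisely what your sum bound needs when $p \nmid k$.

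For comparison, the paper obtains all of these properties from a single device: run a DFS from an artificial root $\tau$ attached to all sinks in the reversed food web restricted to $O$, let $O_i$ be the $i$-th consecutive block of $p$ nodes in finishing (postorder) order, and let $B_i$ be the DFS stack (minus $\tau$) at the moment the block is completed. The stack is always a single path with at most $d-1$ nodes, every popped node's chosen successor is either popped later inside the same block or still on the stack, the blocks have size exactly $p$ except the last one, which has size $k \bmod p$ and empty $B$, and blocks may freely span several trees of the forest. If you replace your peeling by this postorder-block peeling (or, equivalently, take parent-closed prefixes and handle orphaned nodes and multi-tree groups explicitly), your argument goes through.
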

\begin{proof}
The optimal solution $O$ is a viable subset of size at most $k$.
Consider the reverse graph $G$ of $D$ projected on the set $O$ i.e.\   $G=(O, E^{-} \cap (O \times O))$,
and add an artificial root $\tau$ that has an edge to all roots of $G$ (the sinks of $D$).
By the definition of $d(D)$ we obtain that in G
the longest path starting in $\tau$ is at most of size $d+1$.

Start a depth-first-search in $\tau$ and initialize $O_1,B_1$ with empty sets.
Whenever the DFS removes a node from the stack, we add this node to the current set $O_i$, $i \geq 1$.
When $|O_i|=p$ then we add the nodes on the stack, except $\tau$, to the set $B_i$,
but do not change the stack itself.
Then we continue the DFS with the next pair $(O_{i+1}, B_{i+1})$, again initialized  by empty sets.
Eventually the DFS stops, then the stack is empty and thus  $(O_{\lceil \frac{k}{p} \rceil}, \emptyset)$
is the last pair.
Notice since the longest path starting in $\tau$ is at most of size $d+1$,
there are at most $d$ nodes on the stack, one being the root $\tau$ and hence $|B_i| \leq d-1$.
Since the DFS removes each node exactly once from the stack, all the sets $O_i \subseteq O$ are disjoint
and all except the last one are of size $p$.
Hence the DFS produces $\lceil \frac{k}{p} \rceil$ many sets $O_i$ satisfying  $O_i \cup B_i$ is viable, $|O_i| \leq p$ and $|B_i| \leq d-1$.
Finally as, by construction, $B_{\lceil \frac{k}{p} \rceil}= \emptyset$ and $|O_{\lceil \frac{k}{p} \rceil}|=k \mod p$ we obtain that
$\sum_{i=1}^{\lceil \frac{k}{p} \rceil} |O_i \cup B_i|= \sum_{i=1}^{\lfloor \frac{k}{p} \rfloor} |O_i \cup B_i| + (k \mod p)
\leq \lfloor \frac{k}{p} \rfloor(p+d-1) + (k \mod p) \leq \frac{k}{p}(p+d-1)$.
\end{proof}
\smallskip

First, we consider the greedy algorithm and the value $l$ where the $l$-th iteration is the first iteration such that
after executing the loop body,
\linebreak $\max_{O_j \in \DX_O}\frac{\PD(O_j| G)}{c(O_j|G)} > \max_{|S| \leq p, c(S|G) \leq k-|G|}\frac{\PD(S| G)}{c(S|G)}$.
If the greedy solution is different from the optimal one the inequality holds at least for the last iteration of the loop where $S=\emptyset$.
We define $O'_{l+1}=\argmax[O_j \in \DX_O]\frac{\PD(O_j| G)}{c(O_j|G)}$,
i.e.\ $O'_{l+1}$ is in the optimal viable set and would be a better choice than the selection of the algorithm,
but the greedy algorithm cannot make $G \cup O'_{l+1}$ viable without violating the cardinality constraint.

Let $S_i$ denote the set $S$ added to $G$ in iteration $i$ of the while loop in Line~4.
Moreover, for $i \leq l$ we denote the set $G$ after the $i$-th iteration by $G_i$, with $G_0=\emptyset$,
the set $G \cup S$ from Line 6 as $G^*_i=G_{i-1} \cup S_i$
and the ``costs'' of adding set $S_i$ by $c_i=c(S_i|G_{i-1})=c(G_i|G_{i-1})$.
With a slight abuse of notation we will use $G_{l+1}$ to denote the viable set $\vcl(G_{l} \cup O'_{l+1})$, $c_{l+1}$
to denote $c(O'_{l+1}|G_l)$ and
$G^*_{l+1}$ to denote the set $G_{l} \cup O'_{l+1}$ ($G_{l+1}$ is not a feasible solution as $|\vcl(G_{l+1})|>k$).
Notice that while the sets $G^*_i$ are not necessarily viable sets, all the $G_i, i \geq 0$ are viable sets and moreover
$\PD(\mathcal{G}) \geq \PD(G_i) , i \leq l$.

First we show that in each iteration  of the algorithm the set $S_i$ gives
a certain approximation of the missing part of the optimal solution.

\begin{lemma}\label{lemma:par1}For all integers $1 \leq p \leq k$ and $1 \leq i \leq l+1$:
$$
  \frac{\PD(G^*_i|G_{i-1})}{c_i} \geq \frac{p}{(p+d-1)k} \cdot \PD(O|G_{i-1})
$$
\end{lemma}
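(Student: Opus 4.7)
The plan is to prove the bound by chaining three ingredients: (i) a per-iteration \emph{ratio dominance}
$\PD(G^*_i|G_{i-1})/c_i \geq \PD(O_j|G_{i-1})/c(O_j|G_{i-1})$
valid for every piece $O_j$ of the decomposition $\DX_O$; (ii) the submodular upper bound
$\PD(O|G_{i-1}) \leq \sum_j \PD(O_j|G_{i-1})$; and (iii) the decomposition cost bound
$\sum_j c(O_j|G_{i-1}) \leq k(p+d-1)/p$ inherited from Lemma~\ref{lemma:decomposition}.

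For (i) I would split on the range of $i$. For $i \leq l$ the very definition of $l$ says that at state $G_{i-1}$ the maximum ratio attained by any $O_j \in \DX_O$ does not exceed the maximum ratio over all feasible subsets of size $\leq p$, and the latter maximum is attained by the greedy choice $S_i$ with $G^*_i = G_{i-1}\cup S_i$. This yields (i) regardless of whether the individual $O_j$ satisfies the budget condition $c(O_j|G_{i-1}) \leq k-|G_{i-1}|$. For $i=l+1$ the set $O'_{l+1}$ is by construction the $\arg\max$ over $\DX_O$ of that same ratio and $G^*_{l+1} = G_l \cup O'_{l+1}$, so (i) follows immediately.

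For (ii) I would telescope
$\PD(O|G_{i-1}) = \sum_j \PD(O_j \,|\, G_{i-1} \cup O_1 \cup \dots \cup O_{j-1})$
and apply the submodularity of $\PD$ to each term to replace the growing conditioning set by $G_{i-1}$. For (iii) I would use that $G_{i-1}$ is viable and, by Lemma~\ref{lemma:decomposition}, so is $O_j \cup B_j$; the union of two viable sets is viable, hence $c(O_j|G_{i-1}) \leq |O_j \cup B_j| \leq p+d-1$, and summing over the $\lceil k/p\rceil$ pieces gives $\sum_j c(O_j|G_{i-1}) \leq \frac{k}{p}(p+d-1)$ via the size estimate in Lemma~\ref{lemma:decomposition}.

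Combining the three ingredients yields
\[
\PD(O|G_{i-1}) \;\leq\; \sum_j \PD(O_j|G_{i-1}) \;\leq\; \frac{\PD(G^*_i|G_{i-1})}{c_i}\sum_j c(O_j|G_{i-1}) \;\leq\; \frac{\PD(G^*_i|G_{i-1})}{c_i}\cdot \frac{k(p+d-1)}{p},
\]
which rearranges to the claim. I expect the main obstacle to be step (i) in the boundary case $i=l+1$: the greedy's maximality no longer dominates every $O_j$ there (that failure is exactly the defining event of $l$), so instead of invoking the loop's $\arg\max$ one has to appeal to the definition of $O'_{l+1}$ as the best-ratio piece of $\DX_O$ to recover the inequality; the rest is a clean submodularity plus decomposition calculation.
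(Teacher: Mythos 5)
Your proposal is correct and follows essentially the same route as the paper's proof: the ratio dominance coming from the definition of $l$ (for $i\le l$) and of $O'_{l+1}$ (for $i=l+1$), the submodular telescoping bound $\PD(O|G_{i-1})\le\sum_{O_j}\PD(O_j|G_{i-1})$, and the cost bound $\sum_{O_j} c(O_j|G_{i-1})\le\frac{k}{p}(p+d-1)$ via Lemma~\ref{lemma:decomposition} are exactly the three ingredients used there. The only difference is that you make explicit why $c(O_j|G_{i-1})\le|O_j\cup B_j|$ (the union of the viable sets $G_{i-1}$ and $O_j\cup B_j$ is viable), a step the paper leaves implicit.
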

\begin{proof}
By definition of $S_i$ and $O'_{l+1}$ for each $O_j \in \DX_O$
the following holds:
{\small
$$
    \frac{\PD(O_j| G_{i-1})}{c(O_j|G_{i-1})} \leq \frac{\PD(S_i|G_{i-1})}{c(S_i|G_{i-1})}\ \text{for $i\leq l$}
    \quad \ \quad
    \frac{\PD(O_j| G_{l})}{c(O_j|G_{l})} \leq \frac{\PD(O'_{l+1}|G_{l})}{c(O'_{l+1}|G_{l})}
$$}\\
Combining the two inequalities we have 
{\small
$$
   \frac{\PD(O_j| G_{i-1})}{c(O_j|G_{i-1})} \leq \frac{\PD(G^*_i|G_{i-1})}{c(G_i|G_{i-1})}\quad \text{for $i\leq l+1$}
$$}\\
Next we use the monotonicity and submodularity of $\PD$ (for the first inequality) and
the inequality from above (for the second inequality).
{\small
\begin{align*}
\PD(O| G_{i-1}) \leq \sum_{O_j \in \DX_O} \PD(O_j|G_{i-1})
    = \sum_{O_j \in \DX_O} \frac{\PD(O_j|G_{i-1})}{c(O_j|G_{i-1})}\ c(O_j|G_{i-1})\\
    \leq \sum_{O_j \in \DX_O}  \frac{\PD(G^*_i|G_{i-1})}{c_i}\ c(O_j|G_{i-1})
    \leq \frac{\PD(G^*_i|G_{i-1})}{c_i} \ \frac{p+d-1}{p} \cdot k
\end{align*}
}
The last step exploits that by Lemma~\ref{lemma:decomposition} $\sum_{O_j \in \DX_O} c(O_j|G_{i-1}) \leq \frac{k}{p} \cdot(p+d-1)$.
\end{proof}

\begin{lemma}\label{lemma:par2}
For $1 \leq i \leq l+1$:
$$
  \PD(G^*_i) \geq
\left[
  1 -
  \prod_{j=1}^i
    \left(
	1 - \frac{p \cdot c_j}{(d+p-1)\cdot k}
    \right)
\right]
\  \PD(O)
$$
\end{lemma}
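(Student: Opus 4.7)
The plan is to prove the lemma by induction on $i$, using Lemma~\ref{lemma:par1} as the main ingredient in each step. The overall structure follows the classical Nemhauser--Wolsey--Fisher analysis, adapted so that the cost $c_j$ of each ``batch'' $S_j$ enters the recursion rather than a unit cost.

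For the base case $i=1$, I would note that $G_0=\emptyset$, so by the definition of the conditional $\PD(G^*_1)=\PD(G^*_1\mid G_0)$. Lemma~\ref{lemma:par1} immediately gives $\PD(G^*_1)\geq \tfrac{p\,c_1}{(p+d-1)k}\PD(O)$, which is exactly $\bigl[1-(1-\tfrac{p\,c_1}{(p+d-1)k})\bigr]\PD(O)$, matching the claimed expression for $i=1$.

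For the inductive step, assume the bound holds for $i-1$. Writing $\alpha_j := \tfrac{p\,c_j}{(p+d-1)k}$ for brevity, I would start from the identity
\[
\PD(G^*_i) \;=\; \PD(G_{i-1}) + \PD(G^*_i\mid G_{i-1}).
\]
For the second term I apply Lemma~\ref{lemma:par1} together with the monotonicity estimate $\PD(O\mid G_{i-1})\geq \PD(O)-\PD(G_{i-1})$, obtaining $\PD(G^*_i\mid G_{i-1})\geq \alpha_i\bigl(\PD(O)-\PD(G_{i-1})\bigr)$. For the first term I use monotonicity again: since $G_{i-1}=\vcl(G^*_{i-1})\supseteq G^*_{i-1}$, we have $\PD(G_{i-1})\geq \PD(G^*_{i-1})$, so the inductive hypothesis yields $\PD(G_{i-1})\geq \bigl[1-\prod_{j=1}^{i-1}(1-\alpha_j)\bigr]\PD(O)$. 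Combining the two bounds gives
\[
\PD(G^*_i) \;\geq\; (1-\alpha_i)\,\PD(G_{i-1}) + \alpha_i\,\PD(O),
\]
and substituting the inductive bound for $\PD(G_{i-1})$ and simplifying
\[
(1-\alpha_i)\Bigl[1-\prod_{j=1}^{i-1}(1-\alpha_j)\Bigr] + \alpha_i \;=\; 1-\prod_{j=1}^{i}(1-\alpha_j)
\]
yields exactly the desired inequality.

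The main obstacle is bookkeeping rather than any clever idea: one must keep the distinction between $G^*_i = G_{i-1}\cup S_i$ (which may not be viable) and $G_i=\vcl(G^*_i)$ (which is viable) straight, and make sure monotonicity is invoked in the correct direction when passing from $\PD(G_{i-1})$ back to $\PD(G^*_{i-1})$ to apply the inductive hypothesis. One should also verify that the argument is uniform across $i\leq l$ and $i=l+1$, since for $i=l+1$ the set $S_i$ is $O'_{l+1}$ rather than a greedy choice---but Lemma~\ref{lemma:par1} was already stated to cover both cases, so no separate argument is needed. The closing algebraic step is routine telescoping.
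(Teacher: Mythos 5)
Your proposal is correct and follows essentially the same route as the paper's proof: induction on $i$ with base case from Lemma~\ref{lemma:par1}, the identity $\PD(G^*_i)=\PD(G_{i-1})+\PD(G^*_i\mid G_{i-1})$, monotonicity to get $\PD(O\cup G_{i-1})\geq\PD(O)$ and $\PD(G_{i-1})\geq\PD(G^*_{i-1})$, and the telescoping product identity. No substantive differences to report.
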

\begin{proof}
 The proof is by induction on $i$. The base case for $i=1$ is by Lemma~\ref{lemma:par1}.
 For the induction step we show that if the claim holds for all $i'< i$ then it must also hold for $i$.
 For convenience we define $C_i=  \frac{p \cdot c_i}{(d+p-1)\cdot k}$.
{\small
\begin{align*}
   \PD(G^*_i) &=  \PD(G_{i-1}) +\PD(G^*_i|G_{i-1})
                \geq \PD(G_{i-1}) + C_i \cdot \PD(O | G_{i-1})\\
                &= \PD(G_{i-1}) + C_i \cdot \left( \PD(O \cup G_{i-1}) - \PD(G_{i-1})  \right)\\
		&\geq\left( 1 - C_i \right) \cdot \PD(G_{i-1}) +  C_i\cdot  \PD(O)\\
		&\geq \left( 1 - C_i\right) \cdot \PD(G^*_{i-1}) +  C_i\cdot  \PD(O)\\
		&\geq \left( 1 - C_i \right)
\left[
  1 -
  \prod_{j=1}^{i-1}
    \left(
	1 - C_j
    \right)
\right]
\  \PD(O)  +  C_i\cdot  \PD(O)\\
&=
\left[
  1 -
  \prod_{j=1}^i
    \left(
	1 - C_j
    \right)
\right]
\  \PD(O)\\
\quad& \qedhere
\end{align*}}
\end{proof}

\begin{proof}[Theorem~\ref{thm:par_approx}]
We first give a bound for $G^*_{l+1}$.
To this end consider $\sum_{m=1}^{l+1} c_m$.
As $G_{l+1}$ exceeds the cardinality constraint $\sum_{m=1}^{l+1} c_m > k$ it follows that:
{\small
\begin{align*}
1 -
\prod_{j=1}^{l+1}
  \left(
    1- \frac{p \cdot c_j}{(d+p-1)\cdot(k)}
  \right)
&\geq
1 -
\prod_{j=1}^{l+1}
  \left(
    1- \frac{p \cdot c_j}{(d+p-1)\cdot \sum_{m=1}^{l+1} c_m}
  \right)\\
\geq
1 -
\left(1-\frac{p}{(d+p-1) \cdot (l+1)}\right)^{l+1}
&\geq 1- \frac{1}{e^{p/(d+p-1)}}
\end{align*}}
To obtain second inequality we used the fact that the term
$\prod_{j=1}^{l+1}
  \left(
    1- \frac{c'_j}{C}
  \right)$
with constant $C$ and the constraint $\sum_{j=1}^{l+1} c'_j=1$ has its maximum at $c'_j=1/(l+1)$.

By Lemma~\ref{lemma:par2} we obtain
$
  \PD(G^*_{l+1}) \geq
\left(
    1- \frac{1}{e^{p/(d+p-1)}}
\right)
\cdot \PD(O)
$, thus it
only remains to relate $\PD(G^*_{l+1})$ to  $\PD(G_{l})$.
To this end we consider the optimal set of size $p$ selected in Line~2 and denote it by $S_o$.
If the greedy solution has higher $\PD$ than $S_o$ the algorithm returns a superset of $G^*_{l}$
otherwise a superset of $S_o$.
Hence, $\PD(G)$ is larger or equal to the maximum of $\PD(G^*_{l})$ and $\PD(S_o)$.
From the definitions of $G^*_{l}$ and $S_o$ it follows that
$$
\PD(G^*_{l+1}) \leq \PD(G^*_{l}) + \PD(O'_{l+1}) \leq \PD(G^*_{l}) + \PD(S_o).
$$
Now we have that either $\PD(G^*_{l}) \geq \PD(G^*_{l+1})/2 $ or $\PD(S_o) \geq \PD(G^*_{l+1})/2$ and 
thus obtain the following.
$$\PD(G) \geq \max(\PD(G^*_{l}),\PD(S_o)) \geq
\left(
    1- \frac{1}{e^{p/(d+p-1)}}
\right)
\cdot  \frac{\PD(O)}{2}$$
Hence Algorithm~\ref{alg:par_approx}  provides an $\left(
    1- \frac{1}{e^{p/(d+p-1)}}
\right)/2$ - approximation.
\end{proof}

\begin{theorem}\label{thm:par_runtime}
 Algorithm~\ref{alg:par_approx} runs in time $\OX\left( k \cdot ( 3^{p} n^{p+2} + n^{p+1}m)\right)$
 where $n$ is the number of species and $m$ the number of edges in the food web.
\end{theorem}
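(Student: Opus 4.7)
The plan is to account for each source of work in Algorithm~\ref{alg:par_approx} separately, then combine. First I would observe that the while loop in lines 4--7 runs at most $k$ times: as long as $|G|<k$, the chosen $S$ can be taken nonempty (some single species always satisfies the budget constraint), so $|G|$ strictly grows and the loop must terminate within $k$ iterations. Line 1 is a one-shot enumeration whose cost will match one iteration of the loop (minus the outer $k$ factor) and can therefore be absorbed into the final bound.

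Next I would isolate the per-iteration cost. The dominant step inside the loop is line 5, where one scans all subsets $S \subseteq X_\T$ with $|S|\leq p$ to find the one maximizing $\PD(S|G)/c(S|G)$ subject to $c(S|G) \leq k-|G|$. There are $\binom{n}{\leq p} = O(n^p)$ such subsets; for each, $\PD(S|G)$ is obtained by a single oracle call, so the work reduces to evaluating $c(S|G) = |\vcl(G\cup S)| - |G|$. Hence the whole theorem reduces to showing that one evaluation of $|\vcl(G\cup S)|$ can be carried out in $O(3^{p}n^{2} + nm)$ time: multiplying by the $O(n^p)$ subsets and the $k$ iterations then gives exactly $O(k(3^{p} n^{p+2} + n^{p+1} m))$.

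The main obstacle is justifying this $\vcl$-evaluation bound. Because $G$ is already viable, building a minimum viable superset of $G\cup S$ only requires routing each of the at most $p$ elements of $S$ along a directed path in $D$ ending at either a sink or an already-viable vertex of $G$, where paths may share internal vertices. I would contract $G$ together with all sinks of $D$ into a single super-target $t$ in the food web; then $|\vcl(G\cup S)|-|G|$ is exactly the minimum number of nodes in an out-arborescence in the contracted DAG rooted at $t$ and spanning the $\leq p$ terminals of $S$. This is the (unweighted) directed Steiner tree problem with $\leq p$ terminals, which a Dreyfus--Wagner style dynamic program solves as follows: precompute, for every vertex $v$, a shortest path from $v$ to $t$ via BFS in the DAG in $O(n+m)$ time per source, for $O(nm)$ overall; then run the standard DP over subsets $T\subseteq S$ and potential meeting vertices $v$ using the split-into-two-subsets recursion, whose total work telescopes as $\sum_{T\subseteq S}2^{|T|} = 3^{p}$ operations per vertex, giving $O(3^{p}n^{2})$.

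Combining all of the above yields the claimed $O\bigl(k(3^{p}n^{p+2} + n^{p+1}m)\bigr)$ bound; the initialization lines 1--2 contribute at most the same quantity without the outer $k$ factor and are thus dominated. Apart from the Steiner-style argument for $\vcl$, the remainder of the proof is straightforward bookkeeping over the two nested loops.
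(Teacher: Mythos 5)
Your proof is correct and follows essentially the same route as the paper: you bound the while loop by $k$ iterations with $O(n^{p})$ candidate sets each, and reduce every evaluation of $\vcl(G\cup S)$ to a directed Steiner tree instance with at most $p$ terminals by contracting the already-viable part (and the sinks) into a single target, exactly as in the paper's proof. The only cosmetic difference is that you sketch a Dreyfus--Wagner style dynamic program to justify the $O(3^{p}n^{2}+nm)$ Steiner subroutine bound, whereas the paper simply cites the $O(3^{j}n^{2}+nm)$ algorithm of Hsu et al.\ for acyclic digraphs; both yield the stated $O\left(k\cdot(3^{p}n^{p+2}+n^{p+1}m)\right)$ total.
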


\begin{proof}
First notice that computing the function $\vcl$ 
can be reduced
to a Steiner tree problem as follows. 
To compute $\vcl(S)$ first modify the graph $D$ by
(i) taking all the species in $S$ that are already connected (via nodes in $S$) to a sink node in $S$, and merging these nodes
into a single terminal node $t$, and
(ii) connecting the remaining sink nodes in $D$ to $t$.
For the starting nodes in the Steiner tree problem we use the species in $S$ that are not viable.
In  Algorithm~\ref{alg:par_approx} when computing $\vcl(G \cup S)$, there are at most $p$ starting nodes
as the set $G$ is made viable after each iteration and thus the viable set $G$ is contracted to the single vertex $t$.
The Steiner tree problem on acyclic directed graphs can be solved in time $\OX(3^j n^2 + nm)$ \cite{HsuTWL05}, where $j$ is the number of starting and terminal nodes.
In Line~1 we have to consider $\OX(n^{p})$ sets $S$ and for each of them we solve a Steiner tree problem.
with at most $p$ starting nodes. So this first loop can be done in time $\OX(3^{p} n^{p+2} + n^{p+1}m)$.
The number of iterations of the while loop is bounded by $k$ and in each iteration, in Line~4,
we have to solve $\OX(n^p)$ Steiner tree problems with at most $p$ starting nodes.
Now as each iteration takes time $\OX(3^{p} n^{p+2} + n^{p+1}m)$
we get a total running time of
$\OX(k \cdot ( 3^{p} n^{p+2} + n^{p+1}m))$.
\end{proof}

\subsection{An Approximation Algorithm using the Enumeration Technique}\label{subsec:enumeration}
In this Section we use a  modification of the enumeration technique as described in~\cite{KhullerMN99},
to get rid of the factor $1/2$ in the approximation ratio.
The idea is to consider all (viable) sets of a certain size and for each of them to run the greedy
algorithm starting with this set. 
Finally one chooses the best of the produced solutions.
These sets typically have to contain three objects of interest, 
in the case of the maximum coverage problem~\cite{KhullerMN99} 
(cf. Def.~\ref{def:max_coverage} below)
just three sets from the collection $\SC$ and thus the running time there is only increased by a cubic factor.
However, in our setting such an object of interest is a pair $(O_i, B_i)$, 
i.e.\ a set $O_i$ of size $\leq p$ and a set $B_i$ of size $<d$ making $O_i$ viable.
Thus these three objects result in a set of size of $3p+3d-3$,
increasing the running time by a factor of  $n^{3p+3d-3}$.
Algorithm~\ref{alg:slow_alg} gives a precise formulation of the modified algorithm.

\begin{theorem}\label{thm:slow_alg}
Algorithm~\ref{alg:slow_alg} is an $\left( 1- \frac{1}{e^{p/(d+p-1)}}\right)$ - approximation algorithm for OptPDVC which runs in time
$\OX\left( k \cdot ( 3^{p} n^{4p+3d-1} + n^{4p+3d-2}m)\right)$,
where $n$ is the number of species, $m$ the number of edges in the food web and $p\!\geq\!1$ the parameter used in Algorithm~\ref{alg:slow_alg}.
\end{theorem}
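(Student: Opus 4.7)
The plan is to adapt the analysis of Theorem~\ref{thm:par_approx} to the enumeration-based Algorithm~\ref{alg:slow_alg}, using the enumeration to eliminate the factor $1/2$ that appeared in the ratio of Algorithm~\ref{alg:par_approx}. Algorithm~\ref{alg:slow_alg} enumerates all possible triples of ``objects of interest'' $\{(O_1,B_1),(O_2,B_2),(O_3,B_3)\}$, where each $O_i$ is a set of at most $p$ species and each $B_i$ is a set of at most $d-1$ species that, together with $O_i$, is viable. For each enumerated triple it initialises $G_0 = \vcl(O_1\cup B_1\cup O_2\cup B_2\cup O_3\cup B_3)$ and then runs the greedy continuation from lines~4--7 of Algorithm~\ref{alg:par_approx}. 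Finally it returns the best viable set found across all enumerations.

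First I would identify a ``lucky'' enumeration. I take the decomposition $\DX_O = \{(O_1,B_1),\ldots,(O_{\lceil k/p\rceil},B_{\lceil k/p\rceil})\}$ of the optimal solution $O$ supplied by Lemma~\ref{lemma:decomposition}, and reorder its pairs greedily so that $(O^{(i)},B^{(i)})$ maximises the marginal contribution $\PD(O^{(i)}\cup B^{(i)} \mid \bigcup_{j<i}(O^{(j)}\cup B^{(j)}))$ among the remaining pairs. Among the $\mathcal{O}(n^{3p+3d-3})$ triples inspected by the algorithm is the one formed by $(O^{(1)},B^{(1)}), (O^{(2)},B^{(2)}), (O^{(3)},B^{(3)})$, for which the initial set $G_0$ is a superset of $\bigcup_{i=1}^{3}(O^{(i)}\cup B^{(i)})$.

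For this lucky initialisation I would re-run the bookkeeping of Lemmas~\ref{lemma:par1} and~\ref{lemma:par2}, which carry through almost verbatim starting from $G_0$ in place of $\emptyset$, yielding $\PD(G^*_{l+1}) \geq \bigl(1 - 1/e^{p/(d+p-1)}\bigr)\PD(O)$ for the overflow iteration $l+1$ defined as in the previous proof. The place where the factor $1/2$ was incurred was the bound $\PD(G^*_{l+1}) \leq \PD(G^*_{l}) + \PD(O'_{l+1})$ combined with the maximum against $\PD(S_o)$. Here the overflow pair $P'_{l+1}$ is necessarily one of the pairs $(O^{(i)},B^{(i)})$ with $i\geq 4$, and by submodularity together with the greedy ordering one obtains
\begin{equation*}
\PD(P'_{l+1}\mid G_l) \leq \PD\bigl(O^{(3)}\cup B^{(3)} \mid O^{(1)}\cup B^{(1)}\cup O^{(2)}\cup B^{(2)}\bigr) \leq \PD(G_0)/3 \leq \PD(G_l)/3,
\end{equation*}
using $\PD(G_0) \geq \sum_{i=1}^{3}\PD(O^{(i)}\cup B^{(i)}\mid \bigcup_{j<i}(O^{(j)}\cup B^{(j)}))$ and the fact that the three telescoping terms are non-increasing by construction. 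This replaces the factor $1/2$ with a controllable slack which, after a KMN-style rebalancing folded into the recursion of Lemma~\ref{lemma:par2}, disappears from the final ratio.

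For the running time there are $\mathcal{O}(n^{3p+3d-3})$ triples to enumerate, and by Theorem~\ref{thm:par_runtime} the greedy continuation costs $\mathcal{O}(k(3^p n^{p+2} + n^{p+1}m))$ per triple; multiplying gives the stated $\mathcal{O}(k(3^p n^{4p+3d-1} + n^{4p+3d-2}m))$. The main obstacle I anticipate is making the slack bound above tight enough to recover exactly $(1-1/e^{p/(d+p-1)})$ rather than a weaker constant like $(3/4)(1-1/e^{p/(d+p-1)})$: this will likely require the cost-aware argument of Khuller--Moss--Naor (tracking marginal-per-cost ratios of the greedy picks against those of the remaining pairs $(O^{(i)},B^{(i)}),\,i\geq 4$) rather than the crude bound on raw marginals above, and carrying that argument through with pair-valued objects of interest in place of KMN's singleton sets is the delicate technical step of the proof.
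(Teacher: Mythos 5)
Your overall architecture is the paper's: enumerate initial sets built from three pairs $(O_i,B_i)$ of total size at most $3p+3d-3$, continue with the greedy loop, and bound the marginal of the overflow pair by the marginal of the third enumerated pair and hence by $\PD(G_0)/3$. Your greedy reordering of the decomposition is a perfectly serviceable substitute for the paper's choice (best pair, then best third), and your running-time accounting matches. However, there are genuine gaps exactly where the factor $1/2$ is supposed to disappear. First, Lemma~\ref{lemma:par1} does \emph{not} carry over ``almost verbatim'' when the greedy starts from $G_0$: the loop now overflows as soon as $\sum_{m\le l+1} c_m > k-|G_0|$, so to recover the exponent $p/(p+d-1)$ you need the per-iteration inequality with denominator $(p+d-1)(k-|G_0|)$ rather than $(p+d-1)k$; with the unmodified lemma you only get a ratio of the form $1-e^{-\frac{p}{p+d-1}\cdot\frac{k-|G_0|}{k}}$, which is strictly weaker whenever $k$ is comparable to $3p+3d-3$. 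Proving the refined inequality is the content of Lemma~\ref{lemma:slow1}: one must re-apply Lemma~\ref{lemma:decomposition} to the residual instance ($O\setminus G_0$, on the food web in which $G_0$ is deleted and every node with an edge into $G_0$ is turned into a sink) to obtain $\sum_{O_j} c(O_j\mid G_{i-1}) \le \frac{k-|G_0|}{p}(p+d-1)$; your sketch never does this. Moreover the induction should be run for marginals relative to $G_0$, yielding $\PD(G^*_{l+1}\mid G_0) \ge \bigl(1-\frac{1}{e^{p/(p+d-1)}}\bigr)\,\PD(O\mid G_0)$ (Lemma~\ref{lemma:slow2}); this relative form is what the final step actually needs.

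Second, you do not close the argument: you concede that $\PD(O'_{l+1}\mid G_l)\le \PD(G_0)/3$ only gives a ``controllable slack'' and speculate that a cost-aware Khuller--Moss--Naor argument with pair-valued objects is the delicate missing step. No such machinery is required; this is settled by an elementary computation once the two ingredients above are in place. From $\PD(G_l) \ge \PD(G^*_{l+1})-\PD(G_0)/3 = \PD(G^*_{l+1}\mid G_0)+\tfrac{2}{3}\PD(G_0)$ and the relative guarantee one gets $\PD(G) \ge \bigl(1-\frac{1}{e^{p/(p+d-1)}}\bigr)\bigl(\PD(O)-\PD(G_0)\bigr)+\tfrac{2}{3}\PD(G_0) \ge \bigl(1-\frac{1}{e^{p/(p+d-1)}}\bigr)\PD(O)$, because $1-\frac{1}{e^{p/(p+d-1)}}\le 1-\frac{1}{e}<\tfrac{2}{3}$ for all $p,d\ge 1$: the slack is absorbed outright by the fact that the target ratio never exceeds $2/3$, with no ``rebalancing folded into the recursion.'' (Also remember to dispose separately of the case $|G_0|=k$, where the enumeration in Line~2 already contains an optimal solution.)
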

\begin{algorithm}[tb]
\caption{}
\label{alg:slow_alg}
\begin{algorithmic}[1]
\State $\mathcal{G} \leftarrow \emptyset$
\For{each $G \subseteq X$, $G$ viable, $|G| \leq \min(3 p +3d -3,k)$}
      \While{$|G|<k$}
	\State Select a set $S$ with $|S| \leq p$, $c(S|G) \leq k-|G|$ and maximal ratio $\frac{\PD(S|G)}{c(S|G)}$
        \State $G = \vcl(G \cup S)$
      \EndWhile
      \If{$\PD(G)>\PD(\mathcal{G})$}
	  \State $\mathcal{G}\leftarrow G$
      \EndIf
\EndFor
\end{algorithmic}
\end{algorithm}
The proof of the above theorem is similar to the above analysis for Algorithm~\ref{alg:par_approx}.
Again let $O$ be the optimal viable set and $\DX_O$ a decomposition of $O$, given by Lemma~\ref{lemma:decomposition}.
We consider the set $G^*_0=O^*_1 \cup O^*_2 \cup O^*_3$ with $\{O^*_1,O^*_2,O^*_3\}\!\subseteq\!\DX_O$ such that
$\{O^*_1,O^*_2\} =\hspace{-7pt} \argmax[\{O_i,O_j\} \subseteq \DX_O] \hspace{-7pt} \PD(O_i \cup O_j)$
and $O^*_3 = \argmax[O_i \in \DX_O] \PD(O^*_1 \cup O^*_2 \cup O_i)$
and the viable extension $G_0=G^*_0 \cup B^*_1 \cup B^*_2 \cup B^*_3$.
At some point Algorithm~\ref{alg:slow_alg}  will consider $G_0$.
We consider this iteration of the for loop in Line~2 and consider the greedy algorithm.
To this end we use the same notation as in the proof of Theorem~\ref{thm:par_approx},
the only difference being the definition of the set $G_0$ above.
Recall that the value $l$ is defined such that the $l$-th iteration is the first iteration such that
after executing the loop body, $\max_{O_j \in \DX_O}\frac{\PD(O_j| G)}{c(O_j|G)} > \max_{|S| \leq p, c(S|G) \leq k-|G|}\frac{\PD(B| G)}{c(B|G)}$.

If $|G_0|=k$ then Line~2 of the algorithm enumerates all possible solutions and will eventually find the optimal solution.
So in this case there is no need to study the greedy algorithm. 
Hence in the remainder of this section we will assume that $|G_0|<k$. 

\begin{lemma}\label{lemma:slow1} For $1 \leq i \leq l+1$, $p\in \{1,\dots, k\}$:
$$
  \frac{\PD(G^*_i|G_{i-1})}{c_i} \geq \frac{p}{(p+d-1)(k-|G_0|)} \cdot \PD(O | G_{i-1})
$$
\end{lemma}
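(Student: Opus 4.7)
The plan is to mimic the proof of Lemma~\ref{lemma:par1}, adjusting for the fact that in Algorithm~\ref{alg:slow_alg} the greedy phase starts from the nonempty viable set $G_0 \supseteq O^*_1 \cup O^*_2 \cup O^*_3 \cup B^*_1 \cup B^*_2 \cup B^*_3$, so that $G_{i-1} \supseteq G_0$ for all $i \geq 1$.

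First I would reproduce the greedy inequality of Lemma~\ref{lemma:par1}: from the selection rule for $S_i$ (for $i\leq l$) and from the definition of $O'_{l+1}$ (for $i=l+1$), it follows that for every $O_j \in \DX_O$ and every $1 \leq i \leq l+1$,
\[
\frac{\PD(O_j|G_{i-1})}{c(O_j|G_{i-1})} \;\leq\; \frac{\PD(G^*_i|G_{i-1})}{c_i}.
\]
Nothing in this derivation depends on the value of $G_0$, so it carries over directly.

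Next, monotonicity and submodularity yield $\PD(O|G_{i-1}) \leq \sum_{O_j \in \DX_O} \PD(O_j|G_{i-1})$. The new ingredient, absent in Lemma~\ref{lemma:par1}, is that for $j\in\{1,2,3\}$ we have $O^*_j \subseteq G^*_0 \subseteq G_0 \subseteq G_{i-1}$, so $\PD(O^*_j|G_{i-1})=0$ and these three terms drop out of the sum. Substituting the greedy inequality into the remaining terms yields
\[
\PD(O|G_{i-1}) \;\leq\; \frac{\PD(G^*_i|G_{i-1})}{c_i} \sum_{O_j \in \DX_O\setminus\{O^*_1,O^*_2,O^*_3\}} c(O_j|G_{i-1}).
\]

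The lemma then reduces to the estimate
\[
\sum_{O_j \in \DX_O\setminus\{O^*_1,O^*_2,O^*_3\}} c(O_j|G_{i-1}) \;\leq\; \frac{(k-|G_0|)(p+d-1)}{p},
\]
which is where the main work lies. To prove it, I would combine three ingredients: (i) the bound $c(O_j|G_{i-1}) \leq |(O_j\cup B_j)\setminus G_{i-1}|$, which follows because $G_{i-1}\cup O_j \cup B_j$ is viable and contains $G_{i-1}\cup O_j$; (ii) the decomposition guarantee $\sum_{O_j\in\DX_O} |O_j\cup B_j| \leq \tfrac{k(p+d-1)}{p}$ from Lemma~\ref{lemma:decomposition}; and (iii) an accounting that removing the three special pairs $(O^*_j,B^*_j)$ from the budget in (ii) saves at least $\tfrac{|G_0|(p+d-1)}{p}$. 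The subtle point in (iii) is that the DFS construction used in Lemma~\ref{lemma:decomposition} allows the $B_j$'s to overlap with each other and with later $O_j$'s, so the joint size $|G_0|$ can be strictly smaller than $\sum_{i=1}^{3}|O^*_i\cup B^*_i|$; one must exploit the DFS ordering to relate $|G_0|$ to the removed slice of the decomposition budget, rather than arguing only from set-theoretic disjointness. This final accounting step is the main obstacle in the proof.
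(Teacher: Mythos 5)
Your first two steps match the paper: the greedy ratio inequality for $1\le i\le l+1$ and the use of monotonicity/submodularity to reduce everything to a bound on $\sum_j c(O_j|G_{i-1})$ over the sets not already absorbed into $G_{i-1}$ are exactly the paper's argument (the paper drops all $O_j\subseteq G_{i-1}$, you drop only $O^*_1,O^*_2,O^*_3$, which is immaterial). The problem is the final cost bound, which you correctly identify as "where the main work lies" but do not actually prove, and the route you sketch does not go through. Your step (iii) asks that deleting the three special pairs from the budget of Lemma~\ref{lemma:decomposition} save at least $\frac{|G_0|(p+d-1)}{p}$; but the budget actually saved is only $\sum_{i=1}^{3}|O^*_i\cup B^*_i|$, which can be as small as $|G_0|$ itself (e.g.\ when $B^*_1=B^*_2=B^*_3=\emptyset$ and $|G_0|=3p$, the saving is $3p$, not $3(p+d-1)$), so for $d>1$ the claimed saving is simply not there, and no amount of "exploiting the DFS ordering" of the \emph{fixed} decomposition is offered to close this. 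So the proposal has a genuine gap precisely at the step that distinguishes Lemma~\ref{lemma:slow1} from Lemma~\ref{lemma:par1}.

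The paper avoids this accounting entirely by \emph{re-decomposing the residual optimum} rather than reusing $\DX_O$: since every $B^*_i$ consists of nodes of $O$ (they come off the DFS stack over $O$), we have $G_0\subseteq O$ and hence $|O\setminus G_0|\le k-|G_0|$. One then modifies the food web by deleting the nodes of $G_0$ and turning every node that had an arc into $G_0$ into a sink, and applies Lemma~\ref{lemma:decomposition} to this graph and the set $O'=O\setminus G_0$. This produces pairs $(O_j,B_j)$ with $|O_j|\le p$, $|B_j|\le d-1$ and total size at most $\frac{k-|G_0|}{p}(p+d-1)$; because $G_{i-1}\supseteq G_0$ is viable and each sink of the modified graph is either a true sink or points into $G_0$, each $G_{i-1}\cup O_j\cup B_j$ is viable in the original instance, so $\sum_j c(O_j|G_{i-1})\le\frac{k-|G_0|}{p}(p+d-1)$ as required. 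If you want to salvage your write-up, replace step (iii) by this re-decomposition argument (and note the one fact it needs, $G_0\subseteq O$, which your containment chain $G^*_0\subseteq G_0$ does not by itself supply).
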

\begin{proof}
By definition of $S_i$ and $O'_{l+1}$ for each $O_j \in \DX_O$
the following holds:
{\small
$$
    \frac{\PD(O_j| G_{i-1})}{c(O_j|G_{i-1})} \leq \frac{\PD(S_i|G_{i-1})}{c(S_i|G_{i-1})}\ \text{for $i\leq l$}
    \quad \ \quad
    \frac{\PD(O_j| G_{l})}{c(O_j|G_{l})} \leq \frac{\PD(O'_{l+1}|G_{l})}{c(O'_{l+1}|G_{l})}
$$}\\
Combining the two inequalities we have 
{\small
$$
   \frac{\PD(O_j| G_{i-1})}{c(O_j|G_{i-1})} \leq \frac{\PD(G^*_i|G_{i-1})}{c(G_i|G_{i-1})}\quad \text{for $i\leq l+1$}
$$}

Next we use the monotonicity and submodularity of $\PD$ (for the first inequality),
the inequality from above (for the second inequality).
We use $\DX'_O$ to denote $\DX_O \setminus  \{O_j \subseteq G_{i-1}\}$.
{\small
\begin{align*}
\PD(O| G_{i-1}) \leq \sum_{O_j \in \DX'_O} \PD(O_j|G_{i-1})
    = \sum_{O_j \in \DX'_O} 
	    \frac{\PD(O_j|G_{i-1})}
		  {c(O_j|G_{i-1})}\ 
	     c(O_j|G_{i-1})\\
    \leq \sum_{O_j \in \DX'_O}  
	      \frac{\PD(G^*_i|G_{i-1})}
		   {c_i}\ 
	      c(O_j|G_{i-1})
    = \frac{\PD(G^*_i|G_{i-1})}{c_i}  \sum_{O_j \in \DX'_O} c(O_j|G_{i-1})\\
    \leq \frac{\PD(G^*_i|G_{i-1})}{c_i} \ \frac{p+d-1}{p} \cdot(k-|G_0|) 
\end{align*}
}

For the last step consider the modified food web graph where
(i) all nodes in $G_0$ are deleted  and
(ii) for each node that had an edge to a node in $G_0$ all outgoing edges are deleted,
i.e.\ the node is considered as a species that does not depend on any other species.
Then we can apply Lemma~\ref{lemma:decomposition} to this graph and the set $O'=O\setminus G_0$
to obtain a decomposition satisfying $\sum_{O_j \in \DX'_O} c(O_j|G_{i-1}) \leq \frac{k-|G_0|}{p} \cdot(p+d-1)$. 
\end{proof}

\begin{lemma}\label{lemma:slow2}
For $1 \leq i \leq l+1$:
$$
  \PD(G^*_i|G_0) \geq
\left[
  1 -
  \prod_{j=1}^i
    \left(
	1 - \frac{p \cdot c_j}{(d+p-1)\cdot(k-|G_0|)}
    \right)
\right]
\  \PD(O | G_0)
$$
\end{lemma}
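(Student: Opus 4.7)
The plan is to mirror the inductive argument in the proof of Lemma~\ref{lemma:par2}, but conditioned on the pre-committed set $G_0$. Intuitively, once Algorithm~\ref{alg:slow_alg} has enumerated $G_0$, the remainder of the run is a greedy maximisation of the (still monotone and submodular) function $A\mapsto \PD(A\mid G_0)$ with residual budget $k-|G_0|$ and residual optimum $\PD(O\mid G_0)$, and Lemma~\ref{lemma:slow1} is precisely the per-step progress bound tailored to this residual problem. Writing $C_j := \tfrac{p\,c_j}{(d+p-1)(k-|G_0|)}$, the goal is to establish $\PD(G^*_i\mid G_0) \geq \bigl[1-\prod_{j=1}^{i}(1-C_j)\bigr]\PD(O\mid G_0)$ by induction on $i$.

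The base case $i=1$ is immediate: here $G_{i-1}=G_0$, so Lemma~\ref{lemma:slow1} yields $\PD(G^*_1\mid G_0) \geq C_1\,\PD(O\mid G_0)$, which is the claim since $1-(1-C_1)=C_1$. For the inductive step, use the inclusions $G_0\subseteq G_{i-1}\subseteq G^*_i$ (the first because the greedy loop only grows $G$; the second because $G^*_i = G_{i-1}\cup S_i$) to telescope
\[
\PD(G^*_i\mid G_0) \;=\; \PD(G_{i-1}\mid G_0) \;+\; \PD(G^*_i\mid G_{i-1}).
\]
Apply Lemma~\ref{lemma:slow1} to the second summand; then use monotonicity of $\PD$ together with $G_0\subseteq G_{i-1}$ to replace $\PD(O\mid G_{i-1})$ by the lower bound $\PD(O\mid G_0) - \PD(G_{i-1}\mid G_0)$. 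This gives the one-step linear recurrence
\[
\PD(G^*_i\mid G_0) \;\geq\; (1-C_i)\,\PD(G_{i-1}\mid G_0) \;+\; C_i\,\PD(O\mid G_0).
\]
Replacing $\PD(G_{i-1}\mid G_0)$ by the smaller $\PD(G^*_{i-1}\mid G_0)$, which is valid because $G^*_{i-1}\subseteq G_{i-1}=\vcl(G^*_{i-1})$ and $1-C_i\geq 0$ (since $c_i \leq k-|G_0|$ and $d\geq 1$), and then plugging in the inductive hypothesis for $i-1$, the algebra collapses to $\bigl[1-\prod_{j=1}^{i}(1-C_j)\bigr]\PD(O\mid G_0)$, which completes the induction.

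The main obstacle is essentially bookkeeping: verifying the inclusions $G_0\subseteq G_{i-1}\subseteq G^*_i$ and $G^*_{i-1}\subseteq G_{i-1}$ throughout the greedy loop (all immediate from the fact that the loop only grows $G$ and that $G_i$ is defined as $\vcl(G_{i-1}\cup S_i)$), and checking $C_i\in[0,1]$ so that monotonicity can be used in the direction we need. Once these are in place the calculation is line-for-line identical to that of Lemma~\ref{lemma:par2}, with $k$ replaced by $k-|G_0|$ and every occurrence of $\PD(\cdot)$ replaced by $\PD(\cdot\mid G_0)$.
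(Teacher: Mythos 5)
Your proof is correct and follows essentially the same route as the paper's: induction on $i$, the telescoping identity $\PD(G^*_i\mid G_0)=\PD(G_{i-1}\mid G_0)+\PD(G^*_i\mid G_{i-1})$, Lemma~\ref{lemma:slow1} for the per-step gain, and monotonicity (via $G_0\subseteq G_{i-1}$) to obtain the linear recurrence $(1-C_i)\PD(G_{i-1}\mid G_0)+C_i\,\PD(O\mid G_0)$ before invoking the inductive hypothesis. Your direct lower bound $\PD(O\mid G_{i-1})\geq \PD(O\mid G_0)-\PD(G_{i-1}\mid G_0)$ is just a repackaging of the paper's rewriting of $\PD(O\mid G_{i-1})$ as $\PD(O\cup G_{i-1}\mid G_0)-\PD(G_{i-1}\mid G_0)$ followed by monotonicity, so the two arguments coincide.
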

\begin{proof}
 The proof is by induction on $i$. The base case for $i=1$ is by Lemma~\ref{lemma:slow1}.
 For the induction step we show that if the claim holds for all $i'< i$ then it must also hold for $i$.
 For convenience we define $C_i=  \frac{p \cdot c_i}{(d+p-1)\cdot(k-|G_0|)}$.
{\small
\begin{align*}
   \PD(G^*_i|G_0) &=  \PD(G_{i-1}|G_0) +\PD(G^*_i|G_{i-1})\\
		&\geq \PD(G_{i-1}|G_0) + C_i \cdot \PD(O | G_{i-1})\\
		&= \PD(G_{i-1}|G_0) + C_i \cdot \left(\PD(O \cup G_{i-1})-\PD(G_{i-1})\right) \\
		&= \PD(G_{i-1}|G_0) + C_i \cdot \left( \PD(O \cup G_{i-1})-\PD(G_0) - \left(\PD(G_{i-1})-\PD(G_0)\right)\right)\\
		&= \PD(G_{i-1}|G_0) + C_i \cdot \left( \PD(O \cup G_{i-1}|G_0) - \PD(G_{i-1}|G_0)\right)\\
		&\geq \left( 1 - C_i \right) \cdot \PD(G_{i-1}|G_0) +  C_i\cdot  \PD(O | G_0) \\
		&\geq \left( 1 - C_i\right) \cdot \PD(G^*_{i-1}|G_0) +  C_i\cdot  \PD(O | G_0)\\
		&\geq \left( 1 - C_i \right)
\left[
  1 -
  \prod_{j=1}^{i-1}
    \left(
	1 - C_j
    \right)
\right]
\  \PD(O | G_0)  +  C_i\cdot  \PD(O | G_0)\\
&=
\left[
  1 -
  \prod_{j=1}^i
    \left(
	1 - C_j
    \right)
\right]
\  \PD(O | G_0)\\
\quad& \qedhere
\end{align*}}
\end{proof}

We are now prepared to prove the claimed approximation ratio.

\begin{proof}[Theorem~\ref{thm:slow_alg} approximation ratio]
We first give a bound for $G^*_{l+1}$.
To this end consider $\sum_{m=1}^{l+1} c_m$.
As $G^*_{l+1}$ exceeds the cardinality constraint $\sum_{m=1}^{l+1} c_m > k-|G_0|$ and hence:
{\small
\begin{align*}
1 -
\prod_{j=1}^{l+1}
  \left(
    1- \frac{p \cdot c_j}{(d+p-1)\cdot(k-|G_0|)}
  \right)
&\geq
1 -
\prod_{j=1}^{l+1}
  \left(
    1- \frac{p \cdot c_j}{(d+p-1)\cdot \sum_{m=1}^{l+1} c_m}
  \right)\\
\geq
1 -
\left(1-\frac{p}{(d+p-1) \cdot (l+1)}\right)^{l+1}
&\geq 1- \frac{1}{e^{p/(d+p-1)}}
\end{align*}}
To obtain the second inequality we used that the term
$\prod_{j=1}^{l+1}
  \left(
    1- \frac{c'_j}{C}
  \right)$
with constant $C$ and the constraint $\sum_{j=1}^{l+1} c'_j=1$ has its maximum at $c'_j=1/(l+1)$.

By Lemma~\ref{lemma:slow2} we obtain
$
  \PD(G^*_{l+1}|G_0) \geq
\left(
    1- \frac{1}{e^{p/(d+p-1)}}
\right)
\ \PD(O|G_{0})
$ and thus it
only remains to relate $G^*_{l+1}$ to  $G_{l}$.
First as $G_0=O^*_1 \cup O^*_2 \cup O^*_3$ and by the definition of $O^*_1, O^*_2$ we get
$\PD(O^*_3| O^*_1 \cup O^*_2) \leq \PD(G_0)/3$.
Next consider
{\small
\begin{align*}
 \PD(G^*_{l+1})- \PD(G_l)
		    &= \PD(O'_{l+1} | G_l)
		   \leq \PD(O'_{l+1}| O^*_1 \cup O^*_2)\\
		   &\leq \PD(O^*_3 |O^*_1 \cup O^*_2)
		   \leq \PD(G_0)/3
\end{align*}}
Finally, we can combine our results to obtain the claim:
{\small
\begin{align*}
 \PD(G) &\geq \PD(G_l)
	\geq \PD(G^*_{l+1}) - \PD(G_0)/3\\
	&= \PD(G^*_{l+1}|G_0) + \PD(G_0) - \PD(G_0)/3\\
	&\geq \left(1-\frac{1}{e^{p/(p+d-1)}}\right)\ \left(\PD(O)-\PD(G_{0})\right) + 2/3 \cdot \PD(G_0)\\
	&\geq \left(1-\frac{1}{e^{p/(p+d-1)}}\right)\cdot \PD(O)
\end{align*}}
The last inequality is by the fact that $1- \frac{1}{e^{p/(p+d-1)}} \leq 2/3$ for all $p,d \geq 1$.
\end{proof}

To conclude the proof of Theorem~\ref{thm:slow_alg} we finally consider the running time of Algorithm~\ref{alg:slow_alg}.

\begin{proof}[Theorem~\ref{thm:slow_alg} - running time]
As discussed in the proof of Theorem~\ref{thm:par_runtime} computing the function $\vcl$ is basically a Steiner tree problem
and can be solved in time $\tilde\OX(3^j n^2 + nm)$, where $j$ is the number of starting nodes.
In the algorithm we have to consider $\OX(n^{3p+3d-3})$ sets $S$ and for each of them we start the greedy algorithm.
The number of iterations of the while loop is bounded by $k$ and in each iteration, in Line 4,
we have to solve $\OX(n^p)$ Steiner tree problems with at most $p$ starting nodes.
As each iteration takes time $\OX(3^{p} n^{p+2} + n^{p+1}m)$
we get a total running time of
$\OX\left( n^{3p+3d-3} \cdot k \cdot ( 3^{p} n^{p+2} + n^{p+1}m)\right)=$\linebreak
$\OX\left( k \cdot (3^{p} n^{4p+3d-1} + n^{4p+3d-2}m)\right)$.
\end{proof}

\section{Impossibility Results}
\label{hard}
In this section we investigate upper bounds for the approximation guarantees 
one can achieve with approximation algorithms for OptPDVC.
If we allow arbitrary monotone submodular functions in OptPDVC  it is easy to see that no $1-\frac{1}{e}+\epsilon$-approximation algorithm
exists (unless $\P=\NP$). This is immediate by the corresponding result for Max Coverage (with cardinality constraints).
In the following we investigate impossibility results that even hold if one considers phylogenetic diversity functions. 

In Section~\ref{hard1} we show that when considering viability constraints there is no $1-\frac{1}{e}+\epsilon$-approximation algorithm
for OptPDVC even if $\PD$ is additive/modular. However, the hardness proof requires food webs of linear depth.
  
Thus in Section~\ref{hard2} we investigate OptPDVC instances with constant depth food webs and show that maximizing the phylogenetic diversity
is $\APX$-hard.
 
Finally, in Section~\ref{hard3} we consider a straightforward generalization of viability constraints,
where we additionally allow AND-constraints such as ``species $a$ is viable only if we preserve \emph{both} species $b$ and species $c$'',
and show the inapproximability of the phylogenetic diversity under these constraints.
\subsection{$(1-1/e)$-Hardness for Additive Functions with Linear Depth Food Webs}
\label{hard1}

Towards our hardness result for additive functions, 
we first give a formal definition of the Max Coverage problem and recall the corresponding hardness result from the literature.

\begin{definition}\label{def:max_coverage}
The input to the \emph{Max Coverage} problem is a set of domain elements \linebreak $D=\{1,2,\dots,n\}$, together with non-negative integer weights $w_1,\dots w_n$, a collection
$\SC= \{ S_1,\dots S_m\}$ of subsets of $D$ and a positive integer $k$.
The goal is to find a set $\SC' \subseteq \SC$  of cardinality $k$ maximizing
$\sum_{i \in \bigcup_{S \in \SC'} S}w_i$.
\end{definition}

\begin{proposition}\label{prop:maxcoverage}
There is no $\alpha$-approximation algorithm for Max Coverage with $\alpha>1-\frac{1}{e}$ (unless $\P=\NP$)~\cite{Feige98,KhullerMN99}.
\end{proposition}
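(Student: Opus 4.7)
The plan is to reduce from 3-SAT using Feige's framework based on multi-prover interactive proof systems. The first step is to invoke the PCP theorem together with Raz's parallel repetition theorem to obtain, for every constant $\epsilon>0$, an $r$-prover one-round proof system for 3-SAT whose verifier accepts every satisfiable instance with probability $1$ and every unsatisfiable instance with probability at most $\epsilon$. Here $r$ is a constant depending only on $\epsilon$, and the question and answer alphabets have polynomial size.

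The core combinatorial gadget is a \emph{partition system}: a ground set $U$ of size $m$ equipped with $L$ different partitions, each into $r$ blocks, such that (i) selecting exactly one block from each of the $L$ partitions covers $U$ entirely, while (ii) any collection of $r$ blocks that contains two blocks from the same partition covers at most a $(1-1/e+\epsilon/2)$ fraction of $U$. A probabilistic construction, assigning each element of $U$ to a uniformly random block in each partition independently, shows that such systems exist for $m$ polynomial in $L$, $r$, and $1/\epsilon$. The $1-1/e$ threshold emerges here from the probability that an element fails to be hit by any of $r$ uniformly chosen blocks when $r$ is large.

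The reduction then attaches a fresh copy of the partition system to each possible question of the verifier. Answers in the proof system correspond to choosing blocks in the associated partition copies, and picking an element of the Max Coverage collection $\SC$ corresponds to committing to one answer for one question. The budget $k$ is set to the total number of questions. Completeness of the proof system guarantees that for a satisfiable 3-SAT instance there exist prover strategies making the answers consistent across all shared questions, yielding a selection of $k$ sets that covers the entire ground set. Soundness, on the other hand, implies that for an unsatisfiable instance any strategy leads to inconsistent answers on a constant fraction of the verifier's coin tosses, so in the corresponding partition system copies at most a $1-1/e+\epsilon/2$ fraction is covered; summing over all copies gives the gap $1-1/e+\epsilon$.

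The main obstacle is balancing the three error parameters -- the parallel repetition error, the partition system soundness error, and the final approximation ratio -- and checking that the overall reduction remains polynomial in the size of the original 3-SAT instance. Since this entire machinery is precisely the content of Feige's proof in \cite{Feige98}, and the Max Coverage formulation is stated explicitly in \cite{KhullerMN99}, Proposition~\ref{prop:maxcoverage} is imported directly from these references rather than reproved here.
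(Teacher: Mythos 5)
The paper offers no proof of this proposition at all---it is imported directly from \cite{Feige98,KhullerMN99}, which is exactly your conclusion, so your approach coincides with the paper's. (One immaterial slip in your sketch of Feige's machinery: in the partition systems the complete-cover case is taking \emph{all} $r$ blocks of a \emph{single} partition, corresponding to consistent prover answers, while the $1-\frac{1}{e}$ bound applies to $r$ blocks drawn from pairwise \emph{distinct} partitions---not, as you state, to collections containing two blocks of the same partition; since you ultimately defer to the references, this does not affect the outcome.)
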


Below we give a reduction from the Max Coverage problem to OptPDVC, and then we show that it is approximation ratio preserving.
The idea is to first model an additive function via phylogenetic tree as follows: We consider each element $i \in D$ as species
which has an edge to the root with weight $w_i$. All the other species which we will use to build the appropriate
food web are also connected to the root but with weight $0$, i.e.\ the do not contribute to the value of the function.
In the food web we have to encode that we may only pick an element $i \in D$ if we also pick one of the set $S_j$ containing $i$.
This is done by 
introducing species $S_{j,1}$ to $S_{j,n}$ for each set $S_j$ and
connecting $i$ to the nodes $S_{j,n}$ for each $S_j$ with $i \in S_j$.
Finally, we guarantee that only $k$ of the nodes $S_{j,n}$ are selected for viable sets by 
putting each $S_{j,n}$ as the top element of a chain of $n$ species and setting the budget to $(k+1)\cdot n$.
\footnote{While it is in principle possible to select $k+1$ nodes $S_{j,n}$, 
these sets cannot select any $i \in D$ and thus have value $0$ and can be neglected.}  

\begin{reduction}\label{reduction:max_coverage}
Given an instance $(D,\SC,k)$ of the Max Coverage problem,
we build an instance of OptPDVC as follows (cf. Fig.~\ref{figure:reduction_maxcover})
\begin{align*}
 X &=  D \cup \{S_{i,j} \mid S_i \in \SC, 1 \leq j \leq n\}\\
 E &= \{(j,S_{i,n}) \mid j \in S_i\} \cup \{(S_{i,j+1},S_{i,j}) \mid 1 \leq i \leq m, 1 \leq j < n \}\\
 \T &= (\{r\} \cup  X, \{(r,s) \mid s \in X\}) \\
 w_e &=
\begin{cases}
 w_i & \text{if } e=(r,i), i \in D \\
 0 & \text{otherwise}\\
\end{cases}\\
 k'&=(k+1)\cdot n
\end{align*}
\end{reduction}

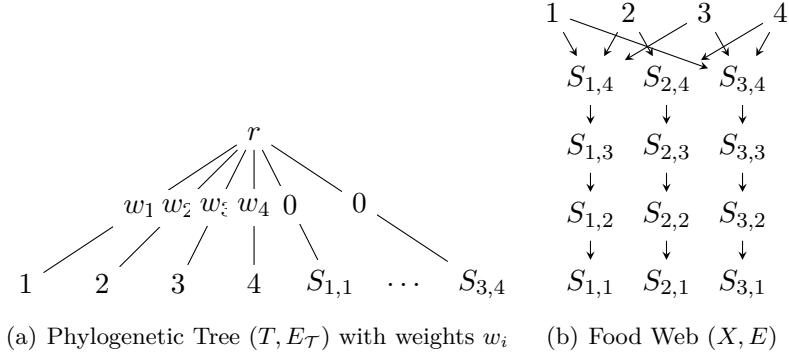
\begin{figure}[tb]
\centering
\tikzstyle{root}=[]
\tikzstyle{species}=[]
\subfigure[Phylogenetic Tree $(T,E_\T)$ with weights $w_i$]{ \begin{tikzpicture}[scale=1,>=stealth]
		\path 	node[root](r){$r$}
			++(-3,-2)node[species](1){$1$}
			++(1,0)node[species](2){$2$}
			++(1,0)node[species](3){$3$}
			++(1,0)node[species](4){$4$}
			++(1,0)node[species](s11){$S_{1,1}$}
			++(1,0)node[species](dots){$\dots$}
			++(1,0)node[species](smn){$S_{3,4}$}
			;
		\path [-,] %
			(r) edge node[fill=white] {$w_1$}  (1)
			(r) edge node[fill=white] {$w_2$} (2)
			(r) edge node[fill=white] {$w_3$} (3)
			(r) edge node[fill=white] {$w_4$} (4)
			(r) edge node[fill=white] {$0$} (s11)
			(r) edge node[fill=white] {$0$} (smn)
			 ;
\end{tikzpicture}}
\subfigure[Food Web $(X,E)$]{ \begin{tikzpicture}[yscale=0.9,>=stealth]
		\path 	node[species](1){$1$}
			++(1,0)node[species](2){$2$}
			++(1,0)node[species](3){$3$}
			++(1,0)node[species](4){$4$}
			(0.5,-1)node[species](s14){$S_{1,4}$}
			++(1,0)node[species](s24){$S_{2,4}$}
			++(1,0)node[species](s34){$S_{3,4}$}
			(0.5,-2)node[species](s13){$S_{1,3}$}
			++(1,0)node[species](s23){$S_{2,3}$}
			++(1,0)node[species](s33){$S_{3,3}$}
			(0.5,-3)node[species](s12){$S_{1,2}$}
			++(1,0)node[species](s22){$S_{2,2}$}
			++(1,0)node[species](s32){$S_{3,2}$}
			(0.5,-4)node[species](s11){$S_{1,1}$}
			++(1,0)node[species](s21){$S_{2,1}$}
			++(1,0)node[species](s31){$S_{3,1}$}
			;
		\path [->,] %
			(1) edge (s14)
			(1) edge (s34)
			(2) edge (s14)
			(2) edge (s24)
			(3) edge (s14)
			(3) edge (s34)
			(4) edge (s24)
			(4) edge (s34)
			(s14) edge (s13)
			(s13) edge (s12)
			(s12) edge (s11)
			(s24) edge (s23)
			(s23) edge (s22)
			(s22) edge (s21)
			(s34) edge (s33)
			(s33) edge (s32)
			(s32) edge (s31)
			 ;
\end{tikzpicture}}
\caption{An illustration of Reduction~\ref{reduction:max_coverage}, applied to $D=\{1,2,3,4\}$, $\SC=\{S_1,S_2,S_3\}$,
$S_1=\{1,2,3\}$, $S_2=\{2,4\}$, $S_3=\{1,3,4\}$.}
\label{figure:reduction_maxcover}
\end{figure}

\begin{lemma}\label{lemma:max_coverage_correctness}
Let $(D,\SC,k)$ be an instance of the Max Coverage problem and let $(\T,(X,E),k')$ be the instance of OptPDVC given by Reduction~\ref{reduction:max_coverage}.
Let $W>0$.  Then there exists a cover $C \subseteq \SC$ of size $k$ with $w(C)\geq W$ for $(D,\SC,k)$
iff there exists a viable set $A$ of size $k' = (k+1)\cdot n$ with $\PD(A)\geq W$.
\end{lemma}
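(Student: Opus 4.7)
The plan is to prove both implications constructively, using the fact that the chain gadget $S_{j,1}, S_{j,2}, \dots, S_{j,n}$ forces an all-or-nothing selection of the nodes $S_{j,n}$ and hence essentially selects a set from $\mathcal{S}$.

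For the ``only if'' direction, given a cover $C = \{S_{j_1}, \dots, S_{j_k}\}$ with $w(C) \geq W$, I would let
\[
  A = \{\,i \in D : i \in \bigcup_{S \in C} S\,\} \;\cup\; \bigcup_{S_j \in C} \{S_{j,1}, S_{j,2}, \dots, S_{j,n}\}.
\]
Viability is immediate: each $S_{j,1}$ is a sink; each $S_{j,l}$ with $l>1$ has its unique successor $S_{j,l-1}$ in $A$; and each domain element $i$ that was added has at least one outgoing arc to some $S_{j,n}$ with $S_j \in C$ and $i \in S_j$, which is in $A$. The size bound is $|A| \leq n + kn = (k+1)n = k'$, and since only the edges $(r,i)$ for $i \in D$ carry nonzero weight, $\PD(A) = \sum_{i \in A \cap D} w_i = w(C) \geq W$. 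If needed, $A$ can be padded with additional chain-nodes to reach size exactly $k'$ without changing $\PD(A)$.

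For the ``if'' direction, suppose $A$ is viable with $|A| \leq k'$ and $\PD(A) \geq W$. The central observation is the following chain lemma: if $S_{j,n} \in A$ then $S_{j,l} \in A$ for every $1 \leq l \leq n$, because the only successor of $S_{j,l}$ is $S_{j,l-1}$ and $S_{j,1}$ is the only sink of the chain. Let $C' = \{S_j : S_{j,n} \in A\}$. Then $A$ contains the $n|C'|$ chain nodes, so
\[
  |A \cap D| \;\leq\; k' - n|C'| \;=\; (k+1-|C'|)\,n.
\]
If $|C'| \geq k+1$ then $|A \cap D| \leq 0$ would force $\PD(A)=0 < W$ (assuming $W>0$; the $W=0$ case is trivial), so $|C'| \leq k$. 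Viability of $A$ forces every $i \in A \cap D$ to have some successor $S_{j,n} \in A$ with $i \in S_j$, so $A \cap D$ is covered by $C'$. Hence $w(C') \geq \sum_{i \in A \cap D} w_i = \PD(A) \geq W$. Pad $C'$ with arbitrary sets from $\mathcal{S}$ to obtain a cover $C \supseteq C'$ of size exactly $k$; monotonicity of coverage gives $w(C) \geq w(C') \geq W$.

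The routine work is just bookkeeping on sizes and weights; the only point that requires attention is the chain lemma, which is where the budget $(k+1)n$ (rather than $kn$) gets its meaning. The extra slack of $n$ is precisely what is needed so that all covered elements of $D$ can be included alongside the $k$ chains, while still being tight enough that one cannot afford a $(k+1)$-st chain together with any element of $D$. I would highlight this accounting carefully, since the whole reduction's approximation-preservation hinges on it.
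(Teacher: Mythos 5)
Your proof is correct and follows essentially the same route as the paper's: the same construction of $A$ from a cover in the forward direction, and in the reverse direction the same chain observation forcing $S_{j,1},\dots,S_{j,n-1}\in A$ whenever $S_{j,n}\in A$, the same counting argument ruling out $k+1$ selected chains when $W>0$, and the same definition of $C'=\{S_j : S_{j,n}\in A\}$ padded to size $k$. No gaps beyond the minor padding details that the paper also treats informally.
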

\begin{proof}
$\Rightarrow$: First  assume that there is a cover $C$ of size $k$  with $w(C)= W$.
We construct a viable set $A$ of size $k$ with $\PD(A)\geq W$ .
The set $A'=\{S_{i,j} \mid S_{i} \in C, 1 \leq j \leq n\} \cup \bigcup_{S_i \in C}S_i$ is a viable set of size $\leq k\cdot n + n$.
Clearly $\PD(A')= W$ and if $|A'|=k'$ we set $A=A'$ and are done.
If $|A'|<k'$ we can construct a viable set $A$ of size $k'$  with $\PD(A)\geq W$ by adding arbitrary viable species.

$\Leftarrow$: Assume there is a viable set $A$ of size $(k+1)\cdot n$  with $\PD(A) = W$.
We construct a cover $C$ of size $k$ with $w(C) \geq W$.
There are at most $k+1$ elements $S_i\in \SC$ such that $S_{i,n} \in A$.
This is by the fact that if $S_{i,n} \in A$ then also $S_{i,1}, \dots ,S_{i,n-1} \in A$.
Now consider the case where there are exactly $k+1$ such elements.
Then we already have  $(k+1)\cdot n$ species in $A$  and thus no $x \in D$ is contained in $A$.
But then $\PD(A) = 0$ as only the edges $(r,x)$ with $x\in D$ have non-zero weight.
Assuming $W > 0$ we thus have at most $k$ elements $S_i \in E$ such that $S_{i,n} \in A$
and further as $A$ is viable for each $x \in A$ there is an $S_{i,n} \in A$ such that $x \in S_i$.
Hence $C'=\{S_i \mid S_{i,n} \in A\}$ is of size at most $k$ and covers all $x \in A \cap D$,
i.e.\ $w(C') = W$. Now by adding arbitrary $S_i \in \SC$ we can construct a cover $C$ of size $k$ with $w(C) \geq W$.
\end{proof}

\begin{theorem}
There is no $\alpha$-approximation algorithm for OptPDVC with $\alpha>1-\frac{1}{e}$ (unless $\P=\NP$), even if $\PD$ is an additive function.
\end{theorem}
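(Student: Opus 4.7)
The plan is to derive the theorem directly from Proposition~\ref{prop:maxcoverage} via Reduction~\ref{reduction:max_coverage}, by showing that the reduction preserves approximation ratios and produces instances on which $\PD$ is additive.

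First I would observe that in the instance produced by Reduction~\ref{reduction:max_coverage} the phylogenetic tree $\T$ is a star with root $r$ connected directly to every species in $X$. Therefore, for any $S \subseteq X$ we have $\PD(S) = \sum_{s \in S} w(r,s)$, which is an additive set function. This establishes that the hardness will hold even within the restricted additive setting.

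Next I would do the standard approximation-preserving reduction argument by contradiction. Suppose some polynomial-time algorithm $\mathcal{A}$ returns a viable set of value at least $\alpha \cdot \mathrm{OPT}$ for OptPDVC with $\alpha > 1 - 1/e$. Given a Max Coverage instance $(D,\SC,k)$ with optimal cover weight $W^\star$, apply Reduction~\ref{reduction:max_coverage} in polynomial time to obtain the OptPDVC instance $(\T,(X,E),k')$ with $k' = (k+1)n$. By the forward direction of Lemma~\ref{lemma:max_coverage_correctness} applied to an optimal cover, there is a viable set of size $k'$ with $\PD$-value at least $W^\star$, so the OptPDVC optimum is at least $W^\star$; conversely the backward direction shows it is at most $W^\star$, hence equal to $W^\star$. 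Running $\mathcal{A}$ yields a viable set $A$ of size $k'$ with $\PD(A) \geq \alpha W^\star$.

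Finally I would convert $A$ into a cover $C \subseteq \SC$ of size $k$ by exactly the construction used in the $\Leftarrow$ direction of Lemma~\ref{lemma:max_coverage_correctness}: take $C' = \{S_i \mid S_{i,n} \in A\}$, which has cardinality at most $k$ (otherwise $A$ would be forced to contain $(k+1)n$ elements from the chains and no elements of $D$, contradicting $\PD(A) \geq \alpha W^\star > 0$), covers every $x \in A \cap D$, and therefore has weight at least $\PD(A) \geq \alpha W^\star$; pad $C'$ with arbitrary sets to obtain $C$ of size exactly $k$ with $w(C) \geq \alpha W^\star$. This is a polynomial-time $\alpha$-approximation for Max Coverage with $\alpha > 1 - 1/e$, contradicting Proposition~\ref{prop:maxcoverage} unless $\P = \NP$. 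There is no real obstacle here; the main thing to be careful about is merely noting that the reduction is polynomial and that the additivity of $\PD$ in the reduced instance is immediate from the star structure of $\T$.
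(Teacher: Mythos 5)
Your proposal is correct and follows essentially the same route as the paper: the paper's proof is exactly ``immediate from Proposition~\ref{prop:maxcoverage}, Lemma~\ref{lemma:max_coverage_correctness}, and the polynomiality of Reduction~\ref{reduction:max_coverage}.'' You merely spell out the details the paper leaves implicit (the star tree making $\PD$ additive, and the standard ratio-preserving contradiction argument via both directions of the lemma), which is fine.
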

\begin{proof}
Immediate by Proposition~\ref{prop:maxcoverage}, Lemma~\ref{lemma:max_coverage_correctness} and the fact that Reduction~\ref{reduction:max_coverage} can be performed in polynomial time.
\end{proof}

Notice that in the above reduction the depth $d$ of the food web graph is not bounded by a constant and in fact is linear in $|X|$.

\subsection{$\APX$-Hardness for OptPDVC with Constant Depth Food Webs}
\label{hard2}

Using a result for Max Vertex Cover on bounded degree graphs, we can show $\APX$-hardness of OptPDVC with constant depth food webs.
\footnote{The authors are grateful to an anonymous reviewer who pointed them to the Max Vertex Cover problem.}

\begin{definition}\label{def:maxvertexcover}
The input to the \emph{Max Vertex Cover} is a graph $G=(V,E)$ (with bounded degree) together with a a positive integer $k$.
The goal is to find a set $S \subseteq V$ of cardinality $k$ that covers a maximum number of edges,
where an edge is covered if it is incident to at least one node in $S$.
\end{definition}

\begin{proposition}\label{prop:maxvertexcover}
Max Vertex Cover is $\APX$-hard for bounded degree graphs. 
In particular there is no PTAS unless $\P=\NP$~\cite{Patrank94}.
\end{proposition}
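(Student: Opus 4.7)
The plan is to rely on the cited result of Petrank~\cite{Patrank94}, so rather than prove this from scratch I would briefly sketch how that proof proceeds and what would be required if one attempted to reprove it here. The natural starting point is the APX-hardness of Vertex Cover (minimization) on bounded-degree graphs, which is a classical PCP-based result of Papadimitriou and Yannakakis. From there one needs a gap-preserving reduction to the \emph{maximum} version of Vertex Cover on bounded-degree graphs, since the trivial identity $\mathrm{MaxVC}(k) = m - \min_{|S|=k} e(V\setminus S)$ does not immediately translate approximation ratios (the quantity $m$ inflates the denominator and washes out small multiplicative gaps).

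The first step would therefore be to work directly with a gap version of a bounded-occurrence constraint satisfaction problem, e.g.\ Max E3SAT-B, which is APX-hard via the PCP theorem with a fixed multiplicative gap $\alpha<1$ between YES and NO instances. Second, I would build a standard FGLSS-style or ``literal/clause gadget'' reduction: create a vertex per literal occurrence, enforce consistency by connecting a variable's literals to its negations, and add small gadgets per clause whose coverage cost is tied to satisfaction. The budget $k$ is set so that covering all consistency and clause edges is possible only when the underlying assignment satisfies (nearly) all clauses. Third, I would verify that (i) the construction has bounded degree whenever the CSP has bounded occurrences, and (ii) the mapping is an L-reduction, so the constant-factor inapproximability of the CSP transfers to a constant-factor inapproximability for Max Vertex Cover, ruling out a PTAS.

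The main obstacle is the gap preservation. Because every edge incident to a chosen vertex is ``for free,'' bounded-degree Max Vertex Cover already admits rather good approximations via the greedy algorithm (each vertex covers at most $\Delta$ edges, and the problem is an instance of submodular maximization with a cardinality constraint). To produce a constant gap one has to make the ``easy'' edges contribute a controlled fraction of the total, so that the hard, CSP-encoding edges still leave a constant-factor window between YES and NO instances. Balancing the gadget sizes with the bounded-degree requirement is the delicate part, and is where Petrank's construction does the real work; I would therefore ultimately quote~\cite{Patrank94} rather than redo this calculation.
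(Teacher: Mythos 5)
The paper offers no proof of this proposition at all: it is stated as a known result and attributed to Petrank~\cite{Patrank94}, which is exactly what you ultimately do, and your surrounding sketch of how such a hardness proof would be structured (bounded-occurrence CSP, gadget reduction, L-reduction, the gap-preservation difficulty caused by the trivial $\Delta$-approximation) is a reasonable account of where the cited work does its work. So your proposal matches the paper's treatment, and no further argument is required here.
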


Below we give an reduction of Max Vertex Cover to our OptPDVC problem.
The main ideas are as follows (see also Fig.~\ref{figure:reduction_maxvertexcover}).
Given a graph $G=(V_G,E_G)$ with max. degree $\Gamma$,
for each node $v$ of the graph we make $\Gamma$ many copies $v_1,v_2, \dots, v_\Gamma$.
Then we build a phylogenetic tree with a root node $r$, inner nodes that correspond to the edges $E_G$
and the copies of nodes $v \in V_G$ as leaf nodes.
Each of the inner nodes is connected to the root via an edge of weight $1$
and as child-nodes it has one of the copies of each of the two  nodes incident to the corresponding edge in $G$.
Again the child-nodes are connected via an edge of weight $1$.
Moreover, each of the nodes $v_1,v_2, \dots, v_\Gamma$ is connected to at most one of the inner nodes and
those not connected to an inner node are connected to the root via an edge of weight $1$.

Then we build the food web graph of depth $\Gamma +1$ such that 
for each $v$ an optimal solution either picks all copies $v_1,v_2, \dots, v_\Gamma$ or none of them. 
To achieve this we have to introduce additional nodes, 
which we can add to the phylogenetic tree such that they do not
contribute to the phylogenetic diversity themselves, by setting the corresponding edge weights to $0$.

\begin{reduction}\label{reduction:maxvertexcover}
Given an instance $(G=(V_G,E_G),k)$ of the Max Vertex Cover,
with $\Gamma$ being the maximum degree of $G$.
For each node we assume an arbitrary order on the incident edges.
We build an instance of OptPDVC as follows 
\begin{align*}
 X &=  \{v_i, v'_i \mid v \in V_G, 1 \leq i \leq \Gamma\}\\
 E &= \{(v_i,v'_\Gamma) \mid v \in V_G, 1 \leq i \leq \Gamma \} \cup 
      \{(v'_i,v'_{i-1}) \mid v \in V_G, 2 \leq i \leq \Gamma \}\\
 \T &= (\{r\} \cup E_V \cup  X, E_\T)\\
  E_\T &= \{(r,f) \mid f \in E_G\} \cup  \{(r,v'_i) \mid v \in V_G, 1 \leq i \leq \Gamma\}\ \cup \\
	&\quad\ \{(f,v_i) \mid \text{$f$ is the $i$-th incident edge of v}\}\ \cup\\
	&\quad\ \{(r,v_i) \mid \text{$v$ has degree smaller than $i$}\}\\
 w_e &=
\begin{cases}
 1 & \text{if } e=(r,v_i), v \in V_G, 1 \leq i \leq \Gamma \\
 1 & \text{if } e=(f,v_i), f \in E, v \in V_G, 1 \leq i \leq \Gamma \\
 1 & \text{if } e=(r,f), f \in E \\
 0 & \text{otherwise}\\
\end{cases}\\
 k'&=2k\cdot \Gamma
\end{align*}
\end{reduction}

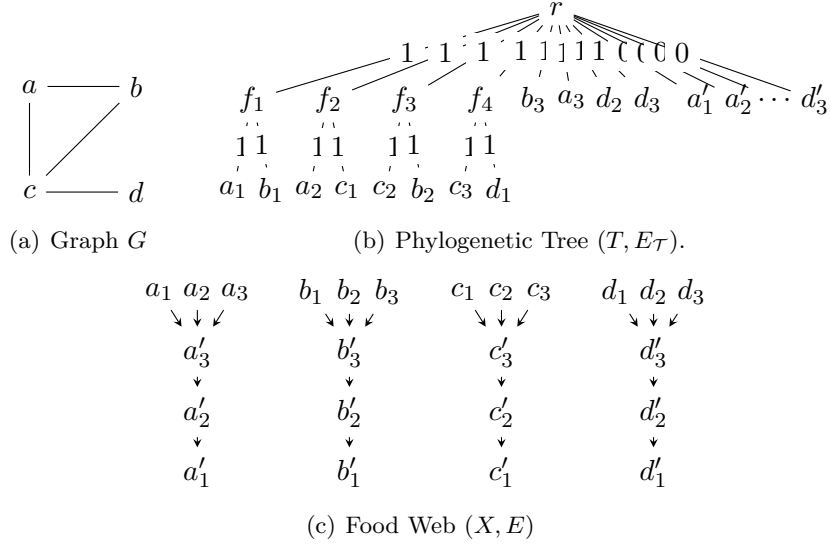
\begin{figure}[tb]
\centering
\tikzstyle{root}=[]
\tikzstyle{species}=[]
\subfigure[Graph $G$]{ \begin{tikzpicture}[scale=0.7,>=stealth]
		\path 	node[](a){$a$}
			++(2,0)node[](b){$b$}
			++(-2,-2)node[species](c){$c$}
			++(2,0)node[species](d){$d$}
			;
		\path [-,] %
			(a) edge (b)
			(a) edge (c)
			(b) edge (c)
			(c) edge (d)
			 ;
\end{tikzpicture}}
\hspace{10pt}
\subfigure[Phylogenetic Tree $(T,E_\T)$.]{ \begin{tikzpicture}[yscale=0.8,>=stealth]
		\path 	node[root](r){$r$}
			++(-4,-1.5)node[species](e1){$f_1$}
			++(1.0,0)node[species](e2){$f_2$}
			++(1.0,0)node[species](e3){$f_3$}
			++(1.0,0)node[species](e4){$f_4$}
			++(0.7,0)node[species](b3){$b_3$}
			++(0.5,0)node[species](a3){$a_3$}
			++(0.5,0)node[species](d2){$d_2$}
			++(0.5,0)node[species](d3){$d_3$}
			++(0.7,0)node[species](a1'){$a'_1$}
			++(0.5,0)node[species](a2'){$a'_2$}
			++(0.5,0)node[species](dots){$\dots$}
			++(0.5,0)node[species](d3'){$d'_3$}
			(-4.25,-3)node[species](a1){$a_1$}
			++(0.5,0)node[species](b1){$b_1$}
			++(0.5,0)node[species](a2){$a_2$}
			++(0.5,0)node[species](c1){$c_1$}
			++(0.5,0)node[species](c2){$c_2$}
			++(0.5,0)node[species](b2){$b_2$}
			++(0.5,0)node[species](c3){$c_3$}
			++(0.5,0)node[species](d1){$d_1$}
			;
		\path [-,] %
			(r) edge node[fill=white] {$1$} (e1)
			(r) edge node[fill=white] {$1$} (e2)
			(r) edge node[fill=white] {$1$} (e3)
			(r) edge node[fill=white] {$1$} (e4)
			(e1) edge node[fill=white] {$1$} (a1)
			(e1) edge node[fill=white] {$1$} (b1)
			(e2) edge node[fill=white] {$1$} (a2)
			(e2) edge node[fill=white] {$1$} (c1)
			(e3) edge node[fill=white] {$1$} (c2)
			(e3) edge node[fill=white] {$1$} (b2)
			(e4) edge node[fill=white] {$1$} (c3)
			(e4) edge node[fill=white] {$1$} (d1)
			(r) edge node[fill=white] {$1$} (b3)
			(r) edge node[fill=white] {$1$} (a3)
			(r) edge node[fill=white] {$1$} (d2)
			(r) edge node[fill=white] {$1$} (d3)
			(r) edge node[fill=white] {$0$} (a1')
			(r) edge node[fill=white] {$0$} (a2')
			(r) edge node[fill=white] {$0$} (dots)
			(r) edge node[fill=white] {$0$} (d3')
			 ;
\end{tikzpicture}}
\subfigure[Food Web $(X,E)$]{ \begin{tikzpicture}[yscale=0.8,>=stealth]
		\path 	(-3,0) node[species](a1){$a_1$}
			++(0.5,0)node[species](a2){$a_2$}
			++(0.5,0)node[species](a3){$a_3$}
			++(-0.5,-1)node[species](a3'){$a'_3$}
			++(-0,-1)node[species](a2'){$a'_2$}
			++(-0,-1)node[species](a1'){$a'_1$}
			(-1,0) node[species](b1){$b_1$}
			++(0.5,0)node[species](b2){$b_2$}
			++(0.5,0)node[species](b3){$b_3$}
			++(-0.5,-1)node[species](b3'){$b'_3$}
			++(-0,-1)node[species](b2'){$b'_2$}
			++(-0,-1)node[species](b1'){$b'_1$}
			(1,0) node[species](c1){$c_1$}
			++(0.5,0)node[species](c2){$c_2$}
			++(0.5,0)node[species](c3){$c_3$}
			++(-0.5,-1)node[species](c3'){$c'_3$}
			++(-0,-1)node[species](c2'){$c'_2$}
			++(-0,-1)node[species](c1'){$c'_1$}
			(3,0) node[species](d1){$d_1$}
			++(0.5,0)node[species](d2){$d_2$}
			++(0.5,0)node[species](d3){$d_3$}
			++(-0.5,-1)node[species](d3'){$d'_3$}
			++(-0,-1)node[species](d2'){$d'_2$}
			++(-0,-1)node[species](d1'){$d'_1$}
			;
			;
		\path [->,] %
			(a1) edge (a3')
			(a2) edge (a3')
			(a3) edge (a3')
			(a3') edge (a2')
			(a2') edge (a1')
			(b1) edge (b3')
			(b2) edge (b3')
			(b3) edge (b3')
			(b3') edge (b2')
			(b2') edge (b1')
			(c1) edge (c3')
			(c2) edge (c3')
			(c3) edge (c3')
			(c3') edge (c2')
			(c2') edge (c1')
			(d1) edge (d3')
			(d2) edge (d3')
			(d3) edge (d3')
			(d3') edge (d2')
			(d2') edge (d1')
			 ;
\end{tikzpicture}}
\caption{An illustration of Reduction~\ref{reduction:maxvertexcover}, applied to the Graph $G$ given in (a) with $\Delta=3$.}
\label{figure:reduction_maxvertexcover}
\end{figure}

\begin{lemma}\label{lemma:maxvertexcover_correctness}
Let $(G=(V_G,E_G),k)$ be an instance of Max Vertex Cover with maximal degree $\Gamma$, and let $(\T,(X,E),k')$ be the instance of OptPDVC given
by Reduction~\ref{reduction:maxvertexcover}.
There exists a vertex cover of size $k$ that covers $W$ many edges iff
there exists a viable set $A$ of size $k'$ with $\PD(A)\geq W + k \cdot \Gamma$.
\end{lemma}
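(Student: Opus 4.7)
The plan splits the argument into the two implications of the biconditional.

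For the easy direction ($\Rightarrow$), given a subset $S\subseteq V_G$ of size $k$ that covers $W$ edges, I simply set
\[
A \;=\; \bigcup_{v\in S}\bigl\{v_1,\dots,v_\Gamma,\,v'_1,\dots,v'_\Gamma\bigr\}.
\]
Then $|A|=2k\Gamma=k'$, and $A$ is viable since each $v'_1$ is a sink, each $v'_i$ ($i\ge 2$) has its successor $v'_{i-1}$ in $A$, and each $v_i$ has its successor $v'_\Gamma$ in $A$. Summing the weighted edges in the spanning tree $\T(A)$: the $k\Gamma$ leaves $v_i$ each contribute $1$ (via either $(r,v_i)$ or $(f,v_i)$); the auxiliaries $v'_i$ contribute $0$; and for every edge $f=\{u,v\}\in E_G$ with $v\in S$ (or $u\in S$), the specific leaf $v_i$ below $f$ lies in $A$, so the edge-node $f$ is in $\T(A)$ and contributes $1$ via $(r,f)$. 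The total is exactly $k\Gamma+W$.

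For the harder direction ($\Leftarrow$), given a viable $A$ with $|A|=2k\Gamma$ and $\PD(A)\ge W+k\Gamma$, I define $S=\{v\in V_G : v_i\in A\text{ for some }i\}$ and set $s:=|S|$, $t:=|\{v_i\in A\}|$, $q:=|\{v'_i\in A\}|$. Viability forces the full chain $v'_1,\dots,v'_\Gamma$ into $A$ for every $v\in S$, so $q\ge s\Gamma$; hence the budget gives $t\le 2k\Gamma-s\Gamma=(2k-s)\Gamma$, while trivially $t\le s\Gamma$ since there are only $\Gamma$ copies per vertex. The diversity decomposes as $\PD(A)=t+F$, where $F$ is the number of edge-nodes appearing in $\T(A)$; and $F\le e_G(S)$, the number of edges of $G$ incident to $S$, because an edge-node $f=\{u,v\}$ can appear in $\T(A)$ only if its specific child leaf $v_i$ or $u_j$ lies in $A$, forcing $v\in S$ or $u\in S$.

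I then split on $s$. If $s\le k$, the bound $t\le s\Gamma$ yields $W+k\Gamma\le\PD(A)\le s\Gamma+e_G(S)$, so $e_G(S)\ge W+(k-s)\Gamma\ge W$; extending $S$ to size $k$ by arbitrary vertices can only grow $e_G$, producing $S^\star$ of size $k$ covering $\ge W$ edges. If $s>k$, the bound $t\le(2k-s)\Gamma$ instead yields $e_G(S)\ge W+(s-k)\Gamma$; since $G$ has maximum degree $\Gamma$, removing any single vertex from $S$ drops $e_G$ by at most $\Gamma$, so iteratively pruning $s-k$ vertices produces $S^\star$ of size $k$ with $e_G(S^\star)\ge e_G(S)-(s-k)\Gamma\ge W$.

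The main obstacle is precisely the case $s>k$: one needs both a quantitative strengthening of the covered-edge bound (which comes from combining the budget inequality with viability) and the bounded-degree hypothesis on $G$ (imported from Max Vertex Cover) to control the loss incurred by pruning $S$ back to size exactly $k$. All remaining steps are bookkeeping around the reduction.
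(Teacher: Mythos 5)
Your proof is correct, and the interesting part --- the backward direction --- takes a genuinely different route from the paper. The paper first replaces $A$ by an \emph{optimal} viable set and proves, via an exchange argument, the structural claim that for every $v\in V_G$ the set $A$ contains either all $2\Gamma$ nodes $v_i,v'_i$ or none of them; the cover of size exactly $k$ then falls out immediately from the budget $|A|=2k\Gamma$. You instead avoid any optimality assumption and argue by counting: viability forces the full chain $v'_1,\dots,v'_\Gamma$ into $A$ for every $v$ with some $v_i\in A$, which together with the budget gives the two bounds $t\le s\Gamma$ and $t\le(2k-s)\Gamma$, and the weight structure gives $\PD(A)=t+F$ with $F\le e_G(S)$; the case split on $s\le k$ versus $s>k$, with the bounded-degree pruning step in the second case, then yields a size-$k$ set covering at least $W$ edges. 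What your version buys is that it works verbatim for an arbitrary (not necessarily optimal) viable set with $\PD(A)\ge W+k\Gamma$, exactly matching the lemma statement, and it sidesteps the somewhat delicate swap argument in the paper's Claim (the existence of enough ``uncovered'' nodes to trade against $A_v$); the price is that you invoke the maximum-degree bound $\Gamma$ already inside the lemma (for the pruning estimate $e_G(S^\star)\ge e_G(S)-(s-k)\Gamma$), whereas the paper only needs bounded degree later, in the theorem relating the approximation factors. Both uses are legitimate since $\Gamma$ is the maximum degree of $G$ by hypothesis, so the difference is purely one of proof architecture.
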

\begin{proof}
$\Rightarrow$: First  assume that there is a vertex cover $S$ of size $k$ with $W$ many incident edges.
Now we can define a viable set $A=\{v_i, v'_i \mid v \in S, 1\leq i \leq \Gamma \}$.
It is easy to verify that $A$ is indeed viable and $|A|=2k \cdot \Gamma = k'$.
For $\PD(A)$ we first have the contributions of the edge in $\T$ that are incident to the nodes in $A$
which is $1$ for nodes $v_i$ and $0$ for nodes $v'_i$, in total $k \cdot \Gamma$.
Second, for each edge $(u,v)$ in $G$ that is covered by $S$ one of the corresponding nodes $u_i, v_{i'}$ is in $A$.
Thus in $\T$ the inner node $f$ correspond to $(u,v)$ has a child node in $A$ and 
the edge $(r,f)$ contributes $1$ to $\PD(A)$.
Hence, $\PD(A) = W + k \cdot \Gamma$.

$\Leftarrow$: Assume there is an optimal viable set $A$ of size $k'$  with $\PD(A) = W + k \cdot \Gamma$.
We first prove the following claim.

\emph{Claim:} For each $v \in V_G$ either all $\{v_i, v'_i \mid1\leq i \leq \Gamma \}$
are contained in $A$ or none of them.

Consider the sets $A_v= A \cap \{v_i, v'_i \mid1\leq i \leq \Gamma \}$ of nodes corresponding to $v$ and included in $A$.
As $A$ is a viable set, as soon as we have one $v_i$ in $A$ all $v'_i$ are in $A$.
Moreover, as $A$ is optimal also a species $v'_i$ is only in $A$ if at least one of the corresponding $v_i$ is in $A$.
Thus if $A_v$ is non-empty then $\Gamma+1 \leq |A_v| \leq 2 \Gamma$.
We have that (i) $\PD(A_v) \leq 2(|A_v|-\Gamma)$, as each $v_i$ can at most cover two edges of $\T$ (both with weight $1$)
and the $v'_i$ do not contribute to $\PD$ at all.
Towards a contradiction let us assume there is an $v \in A$ with (ii) $|A_v| < 2 \Gamma$.
As, by definition of $k'$, we can always find a viable set such that all $A_u, u \in V_G$ 
are either of size $2 \Gamma$ or empty, there must be a set $U\subset V_G$ of nodes
such that (a) $A_u\not=\emptyset$ for all $u \in U$
and (b) in total at least $|A_v|$ many nodes of $\bigcup_{u \in U }\{u_i  \mid 1\leq i \leq \Gamma \}$ 
are not included in $\bigcup_{u \in U} A_u$ (but all nodes $\bigcup_{u \in U }\{u'_i  \mid1\leq i \leq \Gamma \}$ are included).
That is we can consider $A \setminus A_v$ and add $|A_v|$ many of these uncovered nodes while maintaining viability.
While by excluding $A_v$ the diversity drops by less than $|A_v|$ (by (ii) $|A_v| < 2 \Gamma$ and thus by (i) $\PD(A_v) <|A_v|$)
for each node added we increase the diversity by at least $1$.
In total we increased the diversity, while maintaining viability, a contradiction to the optimality of $A$ 
which concludes the proof of the claim.

Given the claim we can define a set cover $S=\{v \mid v \in V_G, v_1 \in A\}$ that is of size $k$.
Now consider $\PD(A)$, which is composed of two parts.
First, the contribution from the edges (in $\T$) that are incident with nodes in $A$.
By the structure of $A$ (given in the Claim) there are $\Gamma \cdot k$ many nodes $v_i$ in A 
and each of them contributes $1$. Thus their total contribution is $\Gamma \cdot k$.
Second, the contribution by the edges (in $\T$) from the root to some inner node which has a child in $A$.
As $\PD(A) = W + k \cdot \Gamma$ these inner nodes contribute $W$ and as each of them contributes $1$ there are $W$ many such nodes.
We next show that $S$ covers at least $W$ edges.
Consider an inner node $f$ and the corresponding edge in $G$. 
We know that in $\T$ one child $u_i$ of $f$ is included in $A$ but then $u \in S$ and by construction
$u$ is incident to the edge $f$, i.e.\ the edge $f$ is covered by $S$.
Hence, $S$ covers at least $W$ many edges.
\end{proof}

\begin{theorem}
OptPDVC is $\APX$-hard for constant depth food webs. 
In particular there is no PTAS unless $\P=\NP$.
\end{theorem}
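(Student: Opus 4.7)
The plan is to combine Reduction~\ref{reduction:maxvertexcover} and Lemma~\ref{lemma:maxvertexcover_correctness} into an L-reduction (equivalently, an approximation-preserving reduction) from Max Vertex Cover on graphs of bounded degree to OptPDVC on constant-depth food webs. Once I have such a reduction, the $\APX$-hardness of the source problem (Proposition~\ref{prop:maxvertexcover}) transfers directly, ruling out a PTAS unless $\P=\NP$.

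First I would verify the two structural requirements on the reduction. It is computable in polynomial time, since $|X| = 2|V_G|\Gamma$ and all edges of $\T$ and of the food web are produced locally. The depth of the food web is $\Gamma+1$: each leaf $v_i$ sits on top of a chain $v_i \to v'_\Gamma \to v'_{\Gamma-1} \to \dots \to v'_1$ with $v'_1$ a sink, and no other arcs exist. Since $\Gamma$ is a fixed constant for bounded-degree input graphs, the output is a constant-depth OptPDVC instance, as required.

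Next I would set up the two L-reduction constants using Lemma~\ref{lemma:maxvertexcover_correctness}. Writing $\mathrm{OPT}_{\mathrm{MVC}}$ and $\mathrm{OPT}_{\mathrm{PDVC}}$ for the optima of the two instances, the lemma yields $\mathrm{OPT}_{\mathrm{PDVC}} = \mathrm{OPT}_{\mathrm{MVC}} + k\Gamma$. For the hard instances of bounded-degree Max Vertex Cover we may assume $\mathrm{OPT}_{\mathrm{MVC}} = \Theta(k\Gamma)$ (the standard constructions underlying Proposition~\ref{prop:maxvertexcover} produce instances whose optimum is a constant fraction of $k\Gamma$, and isolated vertices may be discarded), yielding $\mathrm{OPT}_{\mathrm{PDVC}} \le \beta\cdot\mathrm{OPT}_{\mathrm{MVC}}$ for a constant $\beta = 1 + \Gamma/c$. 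The reverse direction of the lemma produces the second constant: any viable $A$ with $\PD(A) = W + k\Gamma$ yields a vertex cover of size $k$ covering at least $W$ edges, hence $\mathrm{OPT}_{\mathrm{MVC}} - W \le \mathrm{OPT}_{\mathrm{PDVC}} - \PD(A)$, i.e.\ $\gamma = 1$. An $(1-\varepsilon)$-approximation for OptPDVC therefore produces a $(1-\beta\varepsilon)$-approximation for Max Vertex Cover, contradicting its $\APX$-hardness once $\varepsilon$ is taken smaller than the $\APX$-hardness threshold divided by $\beta$.

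The main obstacle, and the piece I would fill in carefully, is making the reverse direction of Lemma~\ref{lemma:maxvertexcover_correctness} work for an \emph{arbitrary} approximately-optimal viable set $A$, rather than only for exactly optimal ones as stated. The Claim inside the proof of that lemma already isolates the local swap I need: if for some $v \in V_G$ we have $0 < |A_v| < 2\Gamma$, then removing $A_v$ and using the freed budget to complete a new block $\{u_i, u'_i : 1 \le i \le \Gamma\}$ for some uncovered $u$ can only increase $\PD(A)$, while preserving both viability and cardinality. I would apply this swap repeatedly as a polynomial-time post-processing step, rounding any $(1-\varepsilon)$-approximate viable set into one of the structured form, and then extract the vertex cover by $S = \{v \mid v_1 \in A\}$. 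With this rounding in place the L-reduction is complete, and $\APX$-hardness of OptPDVC on constant-depth food webs follows.
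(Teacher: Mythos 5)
Your proposal follows essentially the same route as the paper: the same Reduction~\ref{reduction:maxvertexcover} and Lemma~\ref{lemma:maxvertexcover_correctness}, combined with Proposition~\ref{prop:maxvertexcover}, the observation that $\Gamma$ is a constant, and a lower bound relating the Max Vertex Cover optimum to $k\Gamma$ (the paper uses the simpler bound that an optimal cover covers at least $k$ edges, which already gives a constant blow-up $1+\Gamma$, so your stronger assumption $\mathrm{OPT}_{\mathrm{MVC}}=\Theta(k\Gamma)$ is not needed). Your additional swap-based post-processing, which rounds an arbitrary approximate viable set into the block-structured form before extracting the cover, is sound and makes explicit a point the paper's terse proof leaves implicit (its Claim is only argued for optimal viable sets); this is a reasonable completion rather than a different approach.
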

\begin{proof}
Immediate by Proposition~\ref{prop:maxvertexcover}, 
Lemma~\ref{lemma:maxvertexcover_correctness} and the fact that Reduction~\ref{reduction:maxvertexcover} can be performed in polynomial time.
Also notice that in the condition $\PD(A)\geq W + k \cdot \Gamma$ in Lemma~\ref{lemma:maxvertexcover_correctness} 
(i) $\Gamma$ is a constant as we consider bounded degree graphs for Max Vertex Cover and 
(ii) that for an optimal vertex cover/viable set $W$ is at least $k$. i.e., we can cover at least $k$ edges.
Thus, each $(1-\epsilon)$ approximation (resp.\ each PTAS) for OptPDVC would give a $(1-\Gamma \cdot \epsilon)$ approximation (resp.\ a PTAS) for 
Max Vertex Cover.
\end{proof}

\subsection{Inapproximability of OptPDVC with Generalized Viability Constraints}
\label{hard3}
Finally let us consider a straightforward generalization of viability constraints. 
So far we assumed that a species is viable iff at least one of its successors survives,
but one can also imagine cases where one node needs several or even all of its successors to survive to be viable.
In the following we consider food webs where we allow two types of nodes: (i) nodes that are viable if at least one successors survives
and (ii) nodes that are viable only if all successors survive.

We will show that in this setting no approximation algorithm  is possible using a
reduction from the $\NP$-hard problem of deciding whether a propositional formula in 3-CNF is satisfiable.
A 3-CNF formula is a propositional formula which is the conjunction of clauses, and each clause is the disjunction of exactly three literals,
e.g.\ $\varphi=(x_1 \vee x_2 \vee x_3) \wedge (x_2 \vee \neg x_3 \vee \neg x_4) \wedge (x_2 \vee x_3 \vee x_4)$.
The main idea behind Reduction~\ref{red:sat} is that we can build a food web such that a specific species is in a viable set iff 
the given formula is satisfiable.

\begin{reduction}\label{red:sat}
Given a propositional formula $\varphi$ in 3-CNF over propositional variables $\XX=\{x_1,\dots, x_n\}$ with clauses $c_1, \dots, c_m$ build the following instance
$(T,E_\T)$, $(X,E)$ and weight $w_e$ (cf. Fig.~\ref{figure:reduction_sat}) :
\begin{align*}
 X &= \{c_1,\dots, c_m\} \cup \{x, \bar{x}, c_x \mid x \in \XX\} \cup \{t\}\\
 \T &= (\{r\} \cup  X, \{(r,s) \mid s \in X\})\\
 w_e &=
\begin{cases}
 1 & e=(r,t) \\
 0 & \text{otherwise}\\
\end{cases}\\
 E &= \{(c_x,x),(c_x,\bar{x})\mid x \in \XX\} \cup \{(c_i,x) \mid x \in c_i\} \cup \{(c_i, \bar{x}) \mid \neg x \in c_i\}\\
   &\quad \cup \{(t,c_i),(t,c_x) \mid 1\leq i \leq m, x \in \XX\}\\
 k &= 2\cdot|\XX|+m+1
\end{align*}
The species $\{c_1,\dots c_m\} \cup \{x, \bar{x}, c_x \mid x \in \XX\}$ are viable in the traditional sense and $t$ is viable iff all its successors survive.
More formally, a set $S \subseteq X$ is viable if (i) for each $s \in S$ either $s$ is a sink or there is a $s' \in S$ with $(s,s') \in E$
and (ii) if $t \in S$ it holds for all $s'$ with $(t,s') \in E$ that $s' \in S$.
\end{reduction}

\begin{figure}[t!]
\centering
\tikzstyle{root}=[]
\tikzstyle{species}=[]
\subfigure[Phylogenetic Tree $(T,E_\T)$ with weights $w_i$]{ \begin{tikzpicture}[scale=0.73,>=stealth]
		\path 	node[root](r){$r$}
			++(-4,-2)node[species](t){$t$}
			++(1,0)node[species](x1){$x_1$}
			++(1,0)node[species](nx1){$\bar x_1$}
			++(1,0)node[species](cx1){$c_{x_1}$}
			++(0.7,0)node[species](dots1){$\dots$}
			++(0.7,0)node[species](x4){$x_4$}
			++(1,0)node[species](nx4){$\bar x_4$}
			++(1,0)node[species](cx4){$c_{x_4}$}
			++(1,0)node[species](c1){$c_1$}
			++(1,0)node[species](c2){$c_2$}
			++(1,0)node[species](c3){$c_3$}
			;
		\path [-,] %
			(r) edge node[fill=white, inner sep=0.3pt] {$1$}  (t)
			(r) edge node[fill=white, inner sep=0.3pt] {$0$} (x1)
			(r) edge node[fill=white, inner sep=0.3pt] {$0$} (nx1)
			(r) edge node[fill=white, inner sep=0.3pt] {$0$} (cx1)
			(r) edge node[fill=white, inner sep=0.3pt] {$0$} (x4)
			(r) edge node[fill=white, inner sep=0.3pt] {$0$} (nx4)
			(r) edge node[fill=white, inner sep=0.3pt] {$0$} (cx4)
			(r) edge node[fill=white, inner sep=0.3pt] {$0$} (c1)
			(r) edge node[fill=white, inner sep=0.3pt] {$0$} (c2)
			(r) edge node[fill=white, inner sep=0.3pt] {$0$} (c3)
			 ;
\end{tikzpicture}}
\subfigure[Food Web $(X,E)$]{
\begin{tikzpicture}[xscale=0.47,yscale=1,>=stealth]
 		\tikzstyle{arg}=[]
		\path 	node[arg](phi){$t$}
			++(-3,-1) node[arg](c1){$c_1$}
			++(3,0) node[arg](c2){$c_2$}
			++(3,0) node[arg](c3){$c_3$};
		\path 	(-3.8,-2.3)  node[arg](z1){$x_1$}
			++(1,0) node[arg](nz1){$\bar x_1$}
			++(1.2,0) node[arg](z2){$x_2$}
			++(1,0) node[arg](nz2){$\bar x_2$}
			++(1.2,0) node[arg](z3){$x_3$}
			++(1,0) node[arg](nz3){$\bar x_3$}
			++(1.2,0) node[arg](z4){$x_4$}
			++(1,0) node[arg](nz4){$\bar x_4$};
		\path 	(-4,-0.8) node[arg](cx1){$c_{x_1}$}
			++(2.7,0) node[arg](cx2){$c_{x_2}$}
			++(2.6,0) node[arg](cx3){$c_{x_3}$}
			++(2.5,0) node[arg](cx4){$c_{x_4}$};
		\path [left,<-]
			(c1) edge (phi)
			(c2) edge (phi)
			(c3) edge (phi);
		\path [left,<-]
			(cx1) edge (phi)
			(cx2) edge (phi)
			(cx3) edge (phi)
			(cx4) edge (phi)	
			;
		\path [left,<-,]
			(z1) edge (c1)
			(z2) edge (c1)
			(z2) edge (c2)
			(nz3) edge (c2)
			(z3) edge (c3)
			(z4) edge (c3)
			(z3) edge (c1)
			(z2) edge (c3)
			(nz4) edge (c2);

		\path [left,<-,]
			(z1) edge (cx1)
			(nz1) edge (cx1)
			(z2) edge (cx2)
			(nz2) edge (cx2)
			(z3) edge (cx3)
			(nz3) edge (cx3)
			(z4) edge (cx4)
			(nz4) edge (cx4);
\end{tikzpicture}}
\caption{An illustration of Reduction~\ref{red:sat}, applied to the propositional formula $\varphi=(x_1 \vee x_2 \vee x_3) \wedge (x_2 \vee \neg x_3 \vee \neg x_4) \wedge (x_2 \vee x_3 \vee x_4)$.}
\label{figure:reduction_sat}
\end{figure}
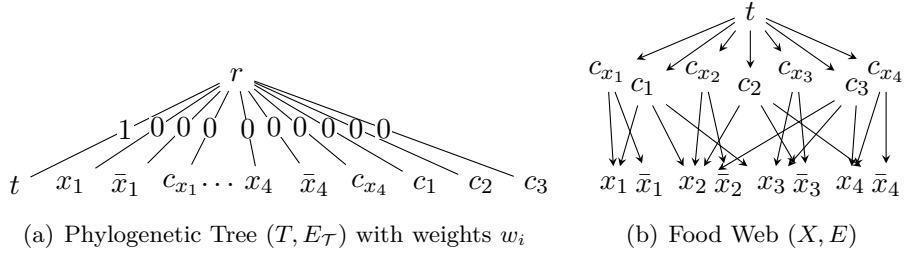

\begin{lemma}\label{lemma:sat_correctness}
Given a propositional formula $\varphi$ and the instance $(\TC,(X,E),k)$ of OptPDVC given by Reduction~\ref{red:sat}.
Then $\varphi$ is satisfiable iff there exists a viable set $A$ of size $\leq k$  with $\PD(A)> 0$.
\end{lemma}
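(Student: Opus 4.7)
The plan is to exploit the very rigid weight and budget structure of the reduction. First I observe that the only edge of $\T$ carrying nonzero weight is $(r,t)$, so $\PD(A)>0$ if and only if $t\in A$. Hence the whole statement reduces to: there is a viable set of size $\le k$ containing $t$ if and only if $\varphi$ is satisfiable.

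For the forward direction, given a satisfying assignment $\tau$, I would exhibit the set
\[
A=\{t\}\cup\{c_1,\dots,c_m\}\cup\{c_x\mid x\in\XX\}\cup\{x\mid\tau(x)=1\}\cup\{\bar x\mid\tau(x)=0\}.
\]
Its size is $1+m+n+n=2n+m+1=k$. Viability of $t$ (the AND-node) is immediate because all successors $c_i$ and $c_x$ are included. Each $c_x$ has one of its two literal-successors included, and each clause-species $c_i$ has at least one literal-successor included because $\tau$ satisfies $c_i$. The literal-species $x,\bar x$ are sinks in $D$, so they are trivially viable.

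For the reverse direction, suppose $A$ is viable, $|A|\le k$, and $t\in A$. Since $t$ has the AND-semantics, all $m+n$ successors of $t$ — namely every $c_i$ and every $c_x$ — must lie in $A$. This already forces $1+m+n$ members, leaving at most $k-(1+m+n)=n$ slots. The viability of each $c_x$ requires at least one of $\{x,\bar x\}$ in $A$, so summing over the $n$ variables we need at least $n$ literal-species; combined with the budget this forces \emph{exactly one} of $\{x,\bar x\}$ per variable. Define $\tau(x)=1$ iff $x\in A$ (and $0$ iff $\bar x\in A$). Finally, viability of each clause-species $c_i$ requires at least one literal-successor in $A$, which by construction of the edges $(c_i,x)$ and $(c_i,\bar x)$ means exactly that $c_i$ contains a literal true under $\tau$. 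Thus $\tau$ satisfies every clause of $\varphi$.

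There is no real obstacle: the reduction is engineered so that the AND-constraint on $t$ together with the tight budget $k=2n+m+1$ leaves room for precisely one literal per variable, turning the viability conditions on the $c_x$ into a ``pick a truth value'' step and those on the $c_i$ into a ``satisfy every clause'' step. The only care needed is the arithmetic check $k-(1+m+n)=n$ and the observation that the literal-species are sinks, so no further viability obligations arise.
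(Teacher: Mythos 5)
Your proof is correct and follows essentially the same route as the paper's: the same witness set in the forward direction, and in the reverse direction the same chain (nonzero $\PD$ forces $t\in A$, the AND-constraint forces all $c_i$ and $c_x$ into $A$, the budget $k=2n+m+1$ leaves exactly one literal per variable, and viability of the $c_i$ yields satisfaction of every clause). The only cosmetic difference is that the paper first pads $A$ to size exactly $k$ before doing the counting, whereas you do the arithmetic directly with $|A|\le k$, which is equally valid.
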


\begin{proof}
$\Rightarrow:$
Let $\alpha$ be a truth assignment satisfying $\varphi$, i.e.\ $\alpha(\varphi)=1$.
Then it is easy to verify that
$A=\{x \mid x\in \XX, \alpha(x)=1\} \cup \{\bar{x} \mid x\in \XX, \alpha(x)=0\}
\cup \{c_1,\dots c_m\} \cup
\{c_x \mid x \in \XX \} \cup
\{t\}$ is a viable set of size $k=2\cdot|\XX|+m+1$ with $\PD(A) = 1$.

$\Leftarrow:$
If there is a viable subset $A'$ with $\PD(A') > 0$ there is also
a viable set $A \supset A'$ of size $k=2\cdot|\XX|+m+1$  and $\PD(A)> 0$, because
$|X|$ is of size $3\cdot|\XX|+m+1$.
We show that the truth-assignment $\alpha$ setting each $s \in A \cap \XX$ to $1$ and
each $s \in \XX \setminus A$ to $0$ satisfies $\varphi$.
As $\PD(A) > 0$ we clearly have that $t\in A$.
Now as $A$ is viable and we have an AND constraint on $t$ also $\{c_1,\dots c_m\} \cup \{c_x \mid x \in \XX \} \subseteq A$.
By $c_x \in A$ we obtain that for each $x \in \XX$ either $x \in A$ or $\bar{x} \in A$,
but not both of them (as the budget only allows $|\XX|$ further species).
Finally as $c_i \in A$ we have that for each clause there is either an $x \in C$  with $\alpha(x)=1$ or a
$\neg x \in C$ with  $\alpha(x)=0$. Thus each clause $c_i$ is satisfied by $\alpha$, i.e.\ $\alpha(c_i)=1$, and hence also $\alpha(\varphi)=1$.
\end{proof}

Now assuming that there is an approximation algorithm for OptPDVC with generalized viability constraints we would immediately get a procedure deciding 3-CNF formulas: apply Reduction~\ref{red:sat} to the formula, compute $\PD$ using the $\alpha$-approximation algorithm,
and return satisfiable if $\PD$ is positive.

\begin{theorem}
It is $\NP$-hard to decide whether an instance of OptPDVC with generalized viability constraints has a viable set $S$ with $\PD(S)>0$.
Thus no approximation algorithm for the problem can exist unless $\P = \NP$.
\end{theorem}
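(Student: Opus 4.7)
The plan is to combine Reduction~\ref{red:sat} with Lemma~\ref{lemma:sat_correctness} directly. First I would observe that Reduction~\ref{red:sat} produces, from any 3-CNF formula $\varphi$, an instance $(\T,(X,E),k)$ of OptPDVC with generalized viability constraints in polynomial time (the sizes of $X$, $E$ and $E_\T$ are all linear in $|\varphi|$, and the weights are constants). By Lemma~\ref{lemma:sat_correctness} the formula $\varphi$ is satisfiable if and only if this instance admits a viable set $S$ with $\PD(S)>0$. Since deciding satisfiability of 3-CNF formulas is $\NP$-hard, this gives the first part of the statement: deciding whether an OptPDVC instance with generalized viability constraints has a viable set of positive diversity is $\NP$-hard.

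The second part about approximation follows by a standard gap argument. Suppose, for the sake of contradiction, that there exists a (multiplicative) $\alpha$-approximation algorithm $\mathcal{A}$ for OptPDVC with generalized viability constraints, for any ratio $\alpha>0$. Given a 3-CNF instance $\varphi$, I would apply Reduction~\ref{red:sat}, run $\mathcal{A}$ on the resulting instance, and answer ``satisfiable'' iff $\mathcal{A}$ returns a viable set $S$ with $\PD(S)>0$. Correctness is immediate from the structure of the reduction: the weights are $0/1$-valued, so the optimal $\PD$ is either $0$ or at least $1$; if $\varphi$ is satisfiable, the optimum is at least $1$, hence $\mathcal{A}$ must return a viable set of value $\geq \alpha \cdot 1 > 0$, while if $\varphi$ is unsatisfiable then by Lemma~\ref{lemma:sat_correctness} every viable set has $\PD=0$ and so does $\mathcal{A}$'s output.

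There is really no hard step here: the reduction is already done, the correctness is already established in Lemma~\ref{lemma:sat_correctness}, and the only thing to notice is the gap of $1$ between optimal values on yes- and no-instances, which makes any finite-ratio approximation strong enough to separate them. Hence no polynomial-time approximation algorithm of any ratio can exist unless $\P=\NP$.
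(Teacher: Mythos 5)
Your proposal is correct and follows essentially the same route as the paper: $\NP$-hardness via Reduction~\ref{red:sat} together with Lemma~\ref{lemma:sat_correctness}, and the inapproximability via the observation that any $\alpha$-approximation would distinguish the zero/positive gap and hence decide 3-SAT. The gap argument you spell out is exactly the one the paper sketches just before the theorem.
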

\begin{proof}
Immediate by  Lemma~\ref{lemma:sat_correctness},  and the fact that Reduction~\ref{red:sat} can be performed in polynomial time.
\end{proof}

\section*{Acknowledgments}
A preliminary version of this paper has been presented at 
the 21st European Symposium on Algorithms (ESA'13)~\cite{DvorakHW13}.

The research leading to these results has received funding from the European Research Council 
under the European Union's Seventh Framework Programme (FP/2007-2013) / ERC Grant
Agreement no.\ 340506.

\end{document}